\def\eqref#1{{(\ref{#1})}}
\def\diag{\mathop{\rm diag}}
\def\rank{\mathop{\rm rank}}
\def\mod{\mathop{\rm mod}}
\def\wgt{\mathop{\rm wgt}}
\def\Z{{\overline Z}}
\def\mat#1{\mathcal{#1}}
\def\Z2{\mathbb{Z}_2}
\newcommand{\ket}[1]{\left\vert{#1}\right\rangle}
\newtheorem{theorem}{Theorem}
\newtheorem{lemma}{Lemma}
\newtheorem{corollary}{Corollary}
\newtheorem{example}{Example}
\newtheorem{definition}{Definition}
\newtheorem{definitionX}{Definition}
\begin{document}

\title{Spin glass reflection of the decoding transition for quantum
  error correcting codes}

\author{Alexey A. Kovalev}

\affiliation{Department of Physics \& Astronomy  
and Nebraska Center for Materials and Nanoscience, 
University of Nebraska, Lincoln, Nebraska 68588, USA}

\author{Leonid P. Pryadko}

\affiliation{Department of Physics \& Astronomy, University of California,
  Riverside, California 92521, USA}

\begin{abstract}
  We study the decoding transition for quantum error correcting codes
  with the help of a mapping to random-bond Wegner spin models.
  Families of quantum low density parity-check (LDPC) codes with a
  finite decoding threshold lead to
  both known models (e.g., random bond Ising and random plaquette $\Z2$
  gauge models) as well as unexplored earlier generally non-local
  disordered spin models with non-trivial phase diagrams.  The
  decoding transition corresponds to a transition from the ordered
  phase by proliferation of extended defects which generalize the
  notion of domain walls to non-local spin models.  In recently
  discovered quantum LDPC code families with finite rates the number
  of distinct classes of such extended defects is exponentially large,
  corresponding to extensive ground state entropy of these codes.
  Here, the transition can be driven by the entropy of the extended
  defects, a mechanism distinct from that in the local spin models
  where the number of defect types (domain walls) is always finite.
\end{abstract}

\keywords{spin glass, Ising, Wegner, stabilizer code, decoding transition}
\maketitle

Locality in space-time is a great organizing principle for a
theoretical physicist who is trying to come up with a model for some
phenomenon.  It works beautifully both in high energy and in condensed
matter physics.  Depending on the details, the corresponding
techniques can be based on the derivative expansion, minimal gauge
coupling, or local lattice Hamiltonians.  Often enough, given a few
more specific symmetries and natural constraints, and considering only
the most local models, one can derive a unique functional form of an
effective Hamiltonian.

On the other hand, having concentrated for so long on local models, we
remain largely unaware of the physics that may be lurking out there,
beyond the familiar locality constraint.  Problem is, the space of
possible non-local continuum or discrete models is vast.  Without this
constraint, and given that most interactions in nature are indeed
local, what property can we use instead to select a non-trivial model?

In this work we explore disordered spin models associated with 
maximum likelihood decoding for stabilizer quantum error
correction codes\cite{gottesman-thesis,Calderbank-1997}.  The
construction is a generalization of the map between various surface
codes and two-dimensional spin
models\cite{Dennis-Kitaev-Landahl-Preskill-2002,%
  Landahl-2011,Katzgraber-Andrist-2013}.  This approach turns out both
physically intuitive and useful as a way to choose non-trivial spin
models, especially if one concentrates on quantum low density
parity-check (LDPC)
codes\cite{Postol-2001,MacKay-Mitchison-McFadden-2004}.

Unlike the case of the classical error-correcting codes\cite{MS-book}
where decoding can be done by minimizing certain energy
functional\cite{Sourlas-1989,Nishimori-book}, with a quantum
stabilizer code large groups of \emph{mutually-degenerate} errors can
not and need not be
distinguished\cite{gottesman-thesis,Calderbank-1997}.  To find the
most likely error, one has to decide between different equivalence
classes; this boils down to minimizing certain free energy functional
depending on the relevant error model.  We consider a particularly
simple error model where this functional can be readily interpreted as
the free energy for a disordered Ising spin model of a general form
studied by Wegner\cite{Wegner71} at some temperature $T\equiv
\beta^{-1}$ and individual bonds flipped independently with
probability $p$; the original decoding problem lives on the Nishimori
line\cite{Nishimori-1981,Nishimori-1980,Morita-Horiguchi-1980,%
  Nishimori-book} of the phase diagram, which generalizes the result
for the surface codes\cite{Dennis-Kitaev-Landahl-Preskill-2002}.  For
a code family where in the limit of large codes the decoding can be
done successfully with probability one, the corresponding spin models
are non-trivial, meaning that they definitely have an ordered
``defect-free'' phase at small $T$ and $p$, and a distinct disordered
phase at large $T$ and $p$.  In addition, in the clean limit, $p\to0$,
the spin models associated with quantum codes have exact self-duality,
in Wegner's sense\cite{Wegner71}.  Further, in the special case of
quantum LDPC codes\cite{Postol-2001,MacKay-Mitchison-McFadden-2004},
the Hamiltonians of the spin models is a sum of generalized Ising
bonds, each given by a (generally non-local) product of only a few
spin operators.

We show that the decoding transition corresponds to a transition from
the ordered phase by proliferation of extended defects which
generalize the notion of domain walls to non-local spin models.  In
the code families where the number of encoded qubits $k$ remains
finite in the limit of large $n$, the transition occurs when the
tension $\lambda$ of one or more of such defects vanish, akin to
vanishing line tension of a domain wall in the 2D Ising model.  In
quantum LDPC code families with finite
rates\cite{Tillich-Zemor-2009,Kovalev-Pryadko-2012,%
  Kovalev-Pryadko-Hyperbicycle-2013,%
  Andriyanova-Maurice-Tillich-2012,Bravyi-Hastings-2013} the number of
distinct classes of such extended defects is exponentially large,
corresponding to extensive ground state entropy of these codes.  Here,
the transition can happen even when all defects have finite energy
densities $\lambda\ge\lambda_0>0$, driven by the entropy of the
extended defects' types.  This mechanism is distinct from that in the
local spin models where the number of defect types (domain walls) is
always finite.

The paper is organized as follows.  We first give a brief overview of
the {\bf Main results}, namely, formulate all the theorems and
briefly describe other results, concentrating on the case of
Calderbank-Shor-Steane (CSS)
codes\cite{Calderbank-Shor-1996,Steane-1996}.  Then, we give a
detailed review of the necessary {\bf Background} facts about
quantum stabilizer codes.  In section {\bf Statistical
  mechanics of decoding} we describe how a spin model is constructed
from a given code family, and relate the possibility of successful
decoding with probability one to the existence of an ordered phase in
the corresponding spin model.  In section {\bf Phase
  transitions} 
we discuss the properties of the thermodynamical phase transition
corresponding to the maximum-likelihood (ML) decoding threshold,
introduce spin correlation functions which can characterize various
phases, and give several inequalities on the location of the
transition.
Finally, we give our {\bf Conclusions}.

\section{Main results}
We start by listing our main results.  To simplify the definitions,
here we concentrate on the case of CSS codes; more general results are
given later in the text along with the corresponding proofs.
\subsection{Definitions} 
\ A classical binary linear code\cite{MS-book}
$\mathcal{C}$ with parameters $[n,k,d]$ 
is a $k$-dimensional subspace of the vector space $\mathbb{F}_{2}^{n}$
of all binary strings of length $n$. Code distance $d$ is the minimal
weight (number of non-zero elements) of a non-zero string in the code.
Rows of the binary \emph{generator matrix} $G$ of the code
$\mathcal{C}\equiv \mathcal{C}_G$ are formed by its $k$ basis
vectors.  A linear code can 
also be specified by the binary \emph{parity check matrix}
$H$, $\mathcal{C}=\{\mathbf{c}\in\mathbb{F}_{2}^{n}|H\mathbf{c}^T=0\}$.
This implies that $H$ and $G$ are mutually orthogonal, $H
G^T=0$, and also 
\begin{equation}
  \label{eq:dual-code}
  \rank H+\rank G=n.
\end{equation}
Parity check matrix is a generating matrix of the code ${\cal
  C}^\perp
={\cal C}_H$ \emph{dual} to ${\cal C}$.  Respectively, the matrix $H$
is an exact dual to $G$, $H\equiv G^*$.  Note that here and throughout
this work we assume that all linear algebra is done modulo $2$, as
appropriate for the vector space $\mathbb{F}_2^n$.

Given a binary matrix $\Theta$ with dimensions $N_s\times
N_\mathrm{b}$, we define a generalized Wegner-type \cite{Wegner71}
partition/correlation function with multi-spin bonds $R_b\equiv
\prod_r S_r ^{\Theta_{r,b}}$ corresponding to the columns of $\Theta$
and Ising spin variables $S_r=\pm1$, $r=1,\ldots,N_s$:
\begin{equation}
  \label{eq:generalized-wegner}
  \mathscr{Z}_{\mathbf{e},\mathbf{m}}(\Theta;\{K\})\equiv {1\over 2^{N_g}}
  \sum_{\{S_r=\pm1\}}\prod_{b=1}^{N_\mathrm{b}}R_b^{m_b}{\exp \biglb(K_b
    (-1)^{e_b} R_b\bigrb) \over 2\cosh\beta},
\end{equation}
where we assume the couplings to be positive, $K_b\equiv \beta J_b>0$,
with $\beta$ being the inverse temperature, the
length-$N_\mathrm{b}$ binary vectors $\mathbf{e}$, $\mathbf{m}$
respectively specify the electric and magnetic disorder, and
$N_g\equiv N_s-\rank\Theta$ is the count of linearly-dependent rows in
$\Theta$.

A quantum CSS code\cite{Calderbank-Shor-1996,Steane-1996} with
parameters $[[n,k,d]]$ can be specified in terms of two $n$-column
binary generator matrices ${\cal G}_X$, ${\cal G}_Z$ with mutually
orthogonal rows, ${\cal G}_X {\cal G}_Z^{T}=0$.  Such a code encodes
$k=n-\rank {\cal G}_X-\rank {\cal G}_Z$ qubits in a block of $n$
qubits.  A CSS code can be thought of as a couple of binary codes, one
correcting $X$-type errors and the other $Z$-type errors.  However, it
turns out that any two errors $\mathbf{e}$ and $\mathbf{e}'$ of, e.g.,
$Z$-type differing by a linear combination of rows of ${\cal G}_Z$
have exactly the same effect on the quantum code---such errors are
called \emph{degenerate}.  The corresponding \emph{equivalence} is denoted
$\mathbf{e}\simeq\mathbf{e}'$.  A \emph{detectable} $Z$-type error
$\mathbf{e}=\mathbf{e}_Z$ has a non-zero \emph{syndrome}
$\mathbf{s}_Z={\cal G}_X \mathbf{e}^T$.  An undetectable and
\emph{non-trivial} $Z$-type error has a zero syndrome and is not degenerate
with an all-zero error; we will call such an error a (non-zero $Z$-type)
\emph{codeword} $\mathbf{c}=\mathbf{c}_Z$.  The distance $d$ of a CSS
code is the minimal weight of a $Z$- or an $X$-type codeword.

For each error type, we introduce a spin glass partition function:
\begin{equation}
  \label{eq:Z0}
  Z_0^{(\mu)}(\mathbf{e};\beta)\equiv\mathscr{Z}_{\mathbf{e},\mathbf{0}}({\cal
    G}_\mu;\{K_b=\beta\}),\;\mu=X,Z. 
\end{equation}
The normalization is such that for a model of independent $X$ or $Z$
errors with equal probability $p$ (probabilities of $e_b=1$ are
independent of each other and equal to $p$), at the Nishimori line
\cite{Nishimori-1981,Nishimori-1980,Morita-Horiguchi-1980,Nishimori-book}, 
\begin{equation}
  \label{eq:nishimori-temperature}
\beta=\beta_p,\quad  e^{-2\beta_p}=p/(1-p), 
\end{equation}
the partition function\ \eqref{eq:Z0} equals to a total
probability of a $\mu$-type error equivalent to $\mathbf{e}$.
We also define the partition function with an \emph{extended defect}
of additionally flipped bonds at the support of the codeword $\mathbf{c}$,
\begin{equation}
  \label{eq:Zc}
  Z_{\bf c}^{(\mu)}(\mathbf{e};\beta)\equiv
Z_{0}^{(\mu)}(\mathbf{e}+\mathbf{c};\beta),
\end{equation}
as well as the partition function corresponding to all  errors with
the same syndrome $\mathbf{s}$ as $\mathbf{e}\equiv \mathbf{e}_\mathbf{s}$,
\begin{equation}
  \label{eq:Ztot}
  Z_{\rm tot}^{(\mu)}(\mathbf{s};\beta)\equiv
\sum_\mathbf{c} Z_{\bf c}^{(\mu)}(\mathbf{e}_\mathbf{s};\beta)=\mathscr{Z}_{\mathbf{e}_\mathbf{s},\mathbf{0}}({\cal
    G}_{\bar\mu}^{*};\{K_b=\beta\}),
\end{equation}
where the summation is over all $2^k$ mutually non-degenerate
$\mu$-type codewords $\mathbf{c}$, such that ${\cal
  G}_{\bar\mu}\mathbf{c}^T=0$, and ${\bar\mu}=X$ if $\mu=Z$ and vice
versa.  The second form uses a matrix ${\cal G}_{\bar\mu}^*$ exactly
dual to ${\cal G}_{\bar\mu}$, cf.\ Eq.\ \eqref{eq:dual-code}.  Note that
Eq.\ \eqref{eq:Ztot} at $\beta=\beta_p$ gives the correctly normalized
probability to encounter the syndrome $\mathbf{s}$,
$\sum_\mathbf{s}Z_\mathrm{tot}(\mathbf{s};\beta)=1$.  Here
and below we omit the error-type index $\mu$ to simplify the
notations.  

Syndrome-based decoding is a classical algorithm to recover the error
equivalence class from the measured syndrome.  In maximum-likelihood
(ML) decoding, one picks the codeword
$\mathbf{c}=\mathbf{c}_\mathrm{max}(\mathbf{e})$ corresponding to the
largest contribution $Z_\mathrm{max}(\mathbf{s};\beta)\equiv
Z_{\mathbf{c}_\mathrm{max}(\mathbf{e})}(\mathbf{e};\beta)$ to the
partition function~\eqref{eq:Ztot} at $\beta=\beta_p$.  Given some
unknown error with the syndrome $\mathbf{s}$, the conditional
probabilities of successful and of failed ML recovery are, respectively,
\begin{equation}
  \label{eq:Psucc}
  P_\mathrm{succ}(\mathbf{s})
  ={Z_\mathrm{max}(\mathbf{s};\beta_p)\over
    Z_\mathrm{tot}(\mathbf{s};\beta_p)},
  \quad P_\mathrm{fail}(\mathbf{s})=1-P_\mathrm{succ}(\mathbf{s}).
\end{equation}
The corresponding average over errors can be written as a simple sum
over allowed syndrome vectors,
\begin{equation}
  \label{eq:Psucc}
  P_\mathrm{succ}
  \equiv \left[{Z_\mathrm{max}(\mathbf{s}_\mathbf{e};\beta_p)\over
    Z_\mathrm{tot}(\mathbf{s}_\mathbf{e};\beta_p)}\right]
=\sum_\mathbf{s}Z_\mathrm{max}(\mathbf{s};\beta_p),
\end{equation}
where the square brackets $[\,\cdot\,]$ denote an average over the errors
$\mathbf{e}$.  For a given infinite family of CSS codes,
asymptotically certain ML decoding implies $
P_\mathrm{succ}^{(X)}\to1$ and $
P_\mathrm{succ}^{(Z)}\to1$ in the limit of large $n$.

In terms of the spin glass model \eqref{eq:Z0}, this corresponds to a
phase where in thermodynamical limit each likely disorder
configuration $\mathbf{e}$ corresponds to a unique defect
configuration $\mathbf{c}=\mathbf{c}_\mathrm{max}(\mathbf{e})$:
\begin{definition}{ (CSS)}
  A \emph{fixed-defect phase} of the spin glass model \eqref{eq:Z0}
  corresponding to an infinite family of CSS codes
  has
  \begin{equation}
    [Z_\mathrm{max}^{(\mu)}(\mathbf{s}_\mathbf{e};\beta)/ 
      Z_\mathrm{tot}^{(\mu)}(\mathbf{s}_\mathbf{e};\beta)]\to1,\quad
    n\to\infty.
    \label{eq:fixed-defect-CSS}  
\end{equation}
\end{definition}\ignorespaces
It is also useful to define a special case of such a phase where any
likely disorder configuration does not introduce any defects:
\begin{definition}{ (CSS)}
  A \emph{defect-free phase} of the spin glass model \eqref{eq:Z0}
  corresponding to an infinite family of CSS codes has
  \begin{equation}
    [Z_0^{(\mu)}(\mathbf{e};\beta)/ 
      Z_\mathrm{tot}^{(\mu)}(\mathbf{s}_\mathbf{e};\beta)]\to1,\quad
    n\to\infty.
    \label{eq:defect-free-CSS}  
\end{equation}
\end{definition}\ignorespaces

\subsection{Results:\ ordered phases}
\ We first prove that the only allowed ordered
phase on the Nishimori line is the defect-free phase:
\begin{theorem}
  For an infinite family of quantum stabilizer codes successful
  decoding with probability one implies that on the Nishimori line the
  corresponding spin model is in the defect-free phase, i.e., in any
  likely configuration $\mathbf{e}$ of flipped bonds the largest
  $Z_\mathbf{c}(\mathbf{e};\beta_p)$ corresponds to
  $\mathbf{c}_\mathrm{max}(\mathbf{e})=\mathbf{0}$.
  \label{th:Nishimori-line-defect-free}
\end{theorem}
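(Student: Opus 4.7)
The plan is to exploit the Nishimori-line identification already invoked in the excerpt: at $\beta = \beta_p$, $Z_{\mathbf{c}}(\mathbf{e};\beta_p)$ is the total probability that the true error lies in the equivalence class of $\mathbf{e}+\mathbf{c}$, $Z_\mathrm{tot}(\mathbf{s};\beta_p)$ is the probability of encountering the syndrome $\mathbf{s}$, and therefore $q_{\mathbf{c}}(\mathbf{s}) \equiv Z_{\mathbf{c}}(\mathbf{e}_\mathbf{s};\beta_p)/Z_\mathrm{tot}(\mathbf{s};\beta_p)$ is literally the posterior probability of class $\mathbf{c}$ given $\mathbf{s}$, with $\sum_\mathbf{c} q_{\mathbf{c}}(\mathbf{s}) = 1$. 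The Nishimori temperature enters the proof only through this probabilistic re-reading.

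The next step is to rewrite both the hypothesis and the defect-free criterion as averages of functionals of $q_{\mathbf{c}}$ against the syndrome law $Z_\mathrm{tot}(\mathbf{s};\beta_p)$. The identity for $P_\mathrm{succ}$ already displayed in the excerpt becomes
\begin{equation*}
P_\mathrm{succ} = \sum_\mathbf{s} Z_\mathrm{tot}(\mathbf{s};\beta_p)\,\max_\mathbf{c} q_{\mathbf{c}}(\mathbf{s}),
\end{equation*}
the mean maximum posterior. For the defect-free quantity I would partition the average $[\,\cdot\,]$ over $\mathbf{e}$ into nested sums over syndromes $\mathbf{s}$, classes $\mathbf{c}$, and stabilizer-equivalent representatives inside each class; collapsing the innermost sum of single-error probabilities produces another factor of $Z_{\mathbf{c}}$ by the same Nishimori identity, yielding
\begin{equation*}
[Z_0(\mathbf{e};\beta_p)/Z_\mathrm{tot}(\mathbf{s}_\mathbf{e};\beta_p)] = \sum_\mathbf{s} Z_\mathrm{tot}(\mathbf{s};\beta_p)\sum_\mathbf{c} q_{\mathbf{c}}(\mathbf{s})^2,
\end{equation*}
the mean collision probability of the posterior.

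The theorem then follows from two elementary inequalities. For any probability vector $\sum_\mathbf{c} q_{\mathbf{c}}^2 \ge (\max_\mathbf{c} q_{\mathbf{c}})^2$, and Cauchy--Schwarz against the normalized measure $Z_\mathrm{tot}$ gives
\begin{equation*}
\sum_\mathbf{s} Z_\mathrm{tot}\,(\max_\mathbf{c} q_{\mathbf{c}})^2 \ge \Bigl(\sum_\mathbf{s} Z_\mathrm{tot}\,\max_\mathbf{c} q_{\mathbf{c}}\Bigr)^{2} = P_\mathrm{succ}^2.
\end{equation*}
Chaining yields $P_\mathrm{succ}^2 \le [Z_0/Z_\mathrm{tot}] \le 1$, with the upper bound immediate since $Z_0$ is one nonnegative term of $Z_\mathrm{tot}$. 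The hypothesis $P_\mathrm{succ} \to 1$ therefore forces the defect-free condition $[Z_0/Z_\mathrm{tot}] \to 1$.

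The main obstacle I anticipate is purely bookkeeping: disentangling the three nested sums (over $\mathbf{e}$, over $\mathbf{s}$, and over classes $\mathbf{c}$) and recognizing that collapsing the degenerate-representative sum turns $[Z_0/Z_\mathrm{tot}]$ from an expression \emph{linear} in $Z_\mathbf{c}$ into one \emph{quadratic} in $Z_\mathbf{c}$; after that, the two inequalities are each one-liners. The CSS assumption enters only cosmetically, through the separation of $X$- and $Z$-type sectors, so the same argument should cover the general stabilizer version advertised in the theorem statement.
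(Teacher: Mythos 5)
Your proposal is correct, and it reaches the conclusion by a genuinely more quantitative route than the paper. The paper's own proof is a two-sentence Bayesian argument: on the Nishimori line $Z_{\mathbf{c}}(\mathbf{e};\beta_p)=P_{\mathbf{c}}(\mathbf{e})$, so Lemma \ref{lemma:upside-down} says that for every likely syndrome the class sum $P_{\rm tot}(\mathbf{s})$ is dominated by a single equivalence class, and that dominant class is by construction the one most likely to contain the true error --- hence the true error's own class, giving $\mathbf{c}_{\rm max}(\mathbf{e})=\mathbf{0}$ pointwise on likely $\mathbf{e}$; the paper also records a second proof via Lemma \ref{lemma:F-syndrome-averaged-Nishimori} (non-negativity of the syndrome-averaged defect free energy, from the Gibbs inequality). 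You instead establish the exact identity $[Z_0/Z_{\rm tot}]=\sum_{\mathbf{s}}Z_{\rm tot}(\mathbf{s};\beta_p)\sum_{\mathbf{c}}q_{\mathbf{c}}(\mathbf{s})^2$ --- your bookkeeping of the three nested sums is right, since $Z_0(\mathbf{e};\beta_p)$ is constant on each degeneracy class and the class-collapsed weight is again $Z_{\mathbf{c}}$ --- and then sandwich the collision probability between $(\max_{\mathbf{c}}q_{\mathbf{c}})^2$ and $\max_{\mathbf{c}}q_{\mathbf{c}}$, which after Jensen gives the clean non-asymptotic chain $P_{\rm succ}^2\le[Z_0/Z_{\rm tot}]\le P_{\rm succ}$ (you only state the weaker upper bound $\le1$, but the sharper one is free and shows the two criteria are asymptotically equivalent in both directions). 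What your approach buys is explicit finite-$n$ control and no need to quantify over ``likely'' configurations until the end; what the paper's buys is brevity by reusing Lemma \ref{lemma:upside-down}. One small point to add for completeness: your chain proves the averaged defect-free condition \eqref{eq:defect-free}, whereas the theorem is phrased pointwise (``any likely configuration $\mathbf{e}$ has $\mathbf{c}_{\rm max}(\mathbf{e})=\mathbf{0}$''); to close that last step you should invoke the analog of Lemma \ref{lemma:upside-down} for the ratio $Z_0/Z_{\rm tot}$ (as the paper notes, the same Markov-type argument applies), since $Z_0/Z_{\rm tot}>1/2$ already forces $Z_0$ to be the strict maximizer. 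Your remark that the CSS structure is cosmetic is also correct: the Nishimori identification \eqref{eq:nishimori-line} holds for the general model \eqref{eq:partition0}.
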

Definitions \ref{def:fixed} and \ref{def:defect-free} are formulated
in terms of the average ratios of partition functions.  As an
alternative, we introduce the free energy increment associated with
adding an extended defect $\mathbf{c}$ to a most likely configuration
at the given disorder $\mathbf{e}$ with the syndrome
$\mathbf{s}= G_{\bar\mu} \mathbf{e}^T$,
\begin{equation}
  \label{eq:defect-free-energy-max-CSS}
  \Delta F_\mathbf{c}^{\mathrm{max},\mu}(\mathbf{\mathbf{s}};\beta)\equiv
  \beta^{-1}\log{ Z_\mathrm{max}^{(\mu)}(\mathbf{s};\beta)\over
    Z_{\mathbf{c}_\mathrm{max}(\mathbf{e})+\mathbf{c}}^{(\mu)}(\mathbf{e};\beta)
  }, \;\,\mu=X,Z. 
\end{equation}
We prove 
\begin{theorem}\label{th:divergent-defect-energy-max}
  For an infinite family of disordered spin models~\eqref{eq:Z0}
 (or Eq.\ \eqref{eq:partition0}), 
  in a fixed-defect
  phase the averaged over the disorder free energy increment for an
  additional defect corresponding to a non-trivial codeword
  $\mathbf{c}\not\simeq\mathbf{0}$ diverges at large $n$, $[\Delta
    F_\mathbf{c}^{\mathrm{max}}(\mathbf{s}_\mathbf{e};\beta)]\to \infty$.
\end{theorem}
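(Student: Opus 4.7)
The plan is to rewrite everything in terms of the normalized weights
$q_\mathbf{c}(\mathbf{e}) \equiv Z_\mathbf{c}(\mathbf{e};\beta)/Z_\mathrm{tot}(\mathbf{s}_\mathbf{e};\beta)$, which are non-negative and, by construction of $Z_\mathrm{tot}$, sum to $1$ over the $2^k$ equivalence classes of codewords. Writing $q_\mathrm{max}\equiv q_{\mathbf{c}_\mathrm{max}(\mathbf{e})}$, the fixed-defect hypothesis becomes $[1-q_\mathrm{max}]\to 0$, and the target quantity is the non-negative random variable
\begin{equation*}
\Delta F_\mathbf{c}^\mathrm{max} = \beta^{-1}\log\bigl(q_\mathrm{max}/q_{\mathbf{c}_\mathrm{max}+\mathbf{c}}\bigr).
\end{equation*}
At $\beta=\beta_p$ the $q_\mathbf{c}$ are literal posterior probabilities of each error class given the observed syndrome, cf.\ Eq.~\eqref{eq:Ztot}, but the argument below does not rely on that reading.

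The single pointwise observation driving the proof is that for $\mathbf{c}\not\simeq\mathbf{0}$ the shifted class $\mathbf{c}_\mathrm{max}+\mathbf{c}$ is distinct from the ML class, hence $q_{\mathbf{c}_\mathrm{max}+\mathbf{c}}$ is just one of the non-maximal weights summing to $1-q_\mathrm{max}$. This gives the pointwise bound $q_{\mathbf{c}_\mathrm{max}+\mathbf{c}}\le 1-q_\mathrm{max}$, and averaging over the disorder yields $[q_{\mathbf{c}_\mathrm{max}+\mathbf{c}}]\le [1-q_\mathrm{max}]\to 0$.

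One would now like to apply Jensen directly to conclude $[\log q_{\mathbf{c}_\mathrm{max}+\mathbf{c}}]\le\log[q_{\mathbf{c}_\mathrm{max}+\mathbf{c}}]\to-\infty$, which would suffice provided $[\log q_\mathrm{max}]$ stays bounded below. This last caveat is the main obstacle, and it is specifically sharp for the finite-rate LDPC families of central interest to the paper: a rare disorder configuration can have $q_\mathrm{max}$ as small as $2^{-k}$, in principle dragging $[\log q_\mathrm{max}]$ to $-\infty$ at a rate linear in $n$ and swamping the gain from $q_{\mathbf{c}_\mathrm{max}+\mathbf{c}}$ being small. I would sidestep this by passing to the good event $A=\{q_\mathrm{max}\ge 1/2\}$, whose complement has probability at most $2[1-q_\mathrm{max}]\to 0$ by Markov's inequality. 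On $A$ the elementary bound $\Delta F_\mathbf{c}^\mathrm{max}\ge -\beta^{-1}\log(2q_{\mathbf{c}_\mathrm{max}+\mathbf{c}})$ holds, while Jensen applied to the probability measure conditioned on $A$ gives $[\mathbf{1}_A\log q_{\mathbf{c}_\mathrm{max}+\mathbf{c}}]\le [\mathbf{1}_A]\log\bigl([\mathbf{1}_A q_{\mathbf{c}_\mathrm{max}+\mathbf{c}}]/[\mathbf{1}_A]\bigr)\to-\infty$, since the argument of $\log$ tends to $0$. Combining these with $\Delta F_\mathbf{c}^\mathrm{max}\ge 0$ off $A$ yields
\begin{equation*}
[\Delta F_\mathbf{c}^\mathrm{max}]\ge [\mathbf{1}_A\,\Delta F_\mathbf{c}^\mathrm{max}]\ge -\beta^{-1}\bigl(\log 2\cdot[\mathbf{1}_A] + [\mathbf{1}_A\log q_{\mathbf{c}_\mathrm{max}+\mathbf{c}}]\bigr)\to\infty,
\end{equation*}
as claimed. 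The same argument, applied to the more general partition function of Eq.~\eqref{eq:partition0}, covers the non-CSS case stated in the theorem.
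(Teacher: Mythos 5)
Your argument is correct and follows essentially the same route as the paper's (much terser) proof: both rest on the good/bad error decomposition implied by the fixed-defect definition (the analog of Lemma~\ref{lemma:upside-down}) together with the pointwise lower bound $\Delta F_\mathbf{c}^{\mathrm{max}}\ge 0$, so that the vanishing total weight of the non-maximal classes forces the averaged increment to diverge. Your Markov-plus-Jensen step simply makes explicit the passage from $[1-q_\mathrm{max}]\to 0$ to divergence of the disorder average, which the paper compresses into the single sentence that ``the corresponding disorder average must also diverge.''
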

In the defect-free phase, the relevant analogous quantity is the free energy
increment with respect to a given error $\mathbf{e}$,
\begin{equation}
  \label{eq:defect-free-energy-zero-CSS}
  \Delta F_\mathbf{c}^{(0,\mu)}(\mathbf{\mathbf{e}};\beta)\equiv
  \beta^{-1}\log{ Z_0^{(\mu)}(\mathbf{e};\beta)\over
    Z_{\mathbf{c}}^{(\mu)}(\mathbf{e};\beta)}.
\end{equation}
The corresponding average over disorder diverges in the defect-free
phase where $\mathbf{c}_\mathrm{max}(\mathbf{e})=\mathbf{0}$ for every
likely error configuration $\mathbf{e}$.  Then, the Theorem
\ref{th:Nishimori-line-defect-free} leads to
\begin{corollary}
  \label{th:corollary}
  On the Nishimori line, the disorder-averaged free energy increment
  $[\Delta F_\mathbf{c}^{(0)}(\mathbf{e};\beta_p)]$ corresponding to
  any non-trivial codeword $\mathbf{c}\not\simeq\mathbf{0}$ diverges at
  large $n$ for $p<p_c$, where $p_c$ is the error probability
  corresponding to the ML decoding transition on the Nishimori line.
\end{corollary}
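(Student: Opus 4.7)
The plan is to combine Theorems \ref{th:Nishimori-line-defect-free} and \ref{th:divergent-defect-energy-max}, exploiting the fact that a defect-free phase is a special case of a fixed-defect phase, in which $\mathbf{c}_\mathrm{max}(\mathbf{e})=\mathbf{0}$ for typical disorder.

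First, by the definition of $p_c$, for $p<p_c$ the ML decoder of the given CSS family succeeds with probability tending to one as $n\to\infty$, i.e., $P_\mathrm{succ}^{(\mu)}\to 1$. Theorem \ref{th:Nishimori-line-defect-free} then places the corresponding spin model \eqref{eq:Z0} at $\beta=\beta_p$ in the defect-free phase, giving $[Z_0(\mathbf{e};\beta_p)/Z_\mathrm{tot}(\mathbf{s}_\mathbf{e};\beta_p)]\to 1$.

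Next I would show that this defect-free phase is, in particular, a fixed-defect phase with the trivial choice $\mathbf{c}_\mathrm{max}(\mathbf{e})=\mathbf{0}$. Indeed, since $Z_0\le Z_\mathrm{max}\le Z_\mathrm{tot}$, condition \eqref{eq:defect-free-CSS} implies \eqref{eq:fixed-defect-CSS}. Moreover, whenever $Z_0/Z_\mathrm{tot}>1/2$ the single term $Z_0$ dominates the sum over all non-trivial codewords, forcing $\mathbf{c}_\mathrm{max}(\mathbf{e})=\mathbf{0}$; a Markov-inequality argument applied to the defect-free condition shows that this occurs except on a set $\mathcal{B}_n$ of disorder realizations of vanishing probability. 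On the complementary ``typical'' set we have $Z_\mathrm{max}(\mathbf{s}_\mathbf{e};\beta_p)=Z_0(\mathbf{e};\beta_p)$ and $Z_{\mathbf{c}_\mathrm{max}(\mathbf{e})+\mathbf{c}}(\mathbf{e};\beta_p)=Z_\mathbf{c}(\mathbf{e};\beta_p)$, so the two free-energy increments coincide pointwise: $\Delta F_\mathbf{c}^{(0)}(\mathbf{e};\beta_p)=\Delta F_\mathbf{c}^\mathrm{max}(\mathbf{s}_\mathbf{e};\beta_p)$.

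With this identification, Theorem \ref{th:divergent-defect-energy-max} applied in this fixed-defect phase gives $[\Delta F_\mathbf{c}^\mathrm{max}]\to\infty$ for any $\mathbf{c}\not\simeq\mathbf{0}$, and the final step is to transfer the divergence to $[\Delta F_\mathbf{c}^{(0)}]$. Note that $\Delta F_\mathbf{c}^\mathrm{max}\ge 0$ by the definition of $Z_\mathrm{max}$, so the typical-set contribution to $[\Delta F_\mathbf{c}^\mathrm{max}]$ already tends to infinity (the atypical contribution is non-negative). On the typical set the two increments are equal, so the typical-set contribution to $[\Delta F_\mathbf{c}^{(0)}]$ also diverges. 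The main obstacle is the atypical contribution to $[\Delta F_\mathbf{c}^{(0)}]$, which is no longer sign-definite; to control it I would use the crude bound $|\Delta F_\mathbf{c}^{(0)}|\le \beta_p^{-1}(|\log Z_0|+|\log Z_\mathbf{c}|)=O(n)$ (each log-partition is bounded by the total coupling energy) together with an exponential-in-$n$ decay of $\Pr(\mathcal{B}_n)$ for $p<p_c$, which follows from the standard concentration/large-deviation bounds for the ML failure probability of a code family deep in its error-correcting regime.
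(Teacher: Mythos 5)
Your overall route is the same as the paper's: Theorem~\ref{th:Nishimori-line-defect-free} places the model in the defect-free phase for $p<p_c$, that phase is recognized as a fixed-defect phase with $\mathbf{c}_\mathrm{max}(\mathbf{e})=\mathbf{0}$ on a typical set, so $\Delta F_\mathbf{c}^{(0)}$ and $\Delta F_\mathbf{c}^{\mathrm{max}}$ coincide pointwise there, and Theorem~\ref{th:divergent-defect-energy-max} supplies the divergence. This is precisely the chain the paper sketches (``in the defect-free phase the corresponding averages should coincide asymptotically'').

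The one step you add beyond the paper --- controlling the atypical contribution to $[\Delta F_\mathbf{c}^{(0)}]$ --- is where the argument as written fails. You invoke ``exponential-in-$n$ decay of $\Pr(\mathcal{B}_n)$,'' but nothing in the hypotheses provides a rate: the decoding transition is defined only by $P_\mathrm{fail}\to0$. Since $|\Delta F_\mathbf{c}^{(0)}|$ can be as large as $2d_\mathbf{c}=O(n)$ (the two-sided analogue of Eq.~\eqref{eq:F0-bound-d}), a bad set whose probability decays slowly (say as $1/\log n$) could contribute a negative term of order $n/\log n$, and Theorem~\ref{th:divergent-defect-energy-max} guarantees divergence of the typical part but no rate with which to beat it. The paper's own device, which needs no decay-rate assumption, is Lemma~\ref{lemma:F-syndrome-averaged-Nishimori}: on the Nishimori line the Gibbs inequality gives $\Delta F_\mathbf{c}(\mathbf{s};\beta_p)=[\Delta F_\mathbf{c}^{(0)}(\mathbf{e};\beta_p)]_\mathbf{s}\ge0$ for \emph{every} syndrome, so after writing $[\Delta F_\mathbf{c}^{(0)}]=\sum_\mathbf{s}P_\mathrm{tot}(\mathbf{s})\,\Delta F_\mathbf{c}(\mathbf{s};\beta_p)$ the atypical syndromes can simply be dropped and the divergence of the typical contribution suffices. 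Note that this non-negativity is special to $\beta=\beta_p$ --- which is exactly why the corollary is stated on the Nishimori line --- whereas your proposed tail bound, if it worked, would never use that fact; you should replace the large-deviation step by the Lemma~\ref{lemma:F-syndrome-averaged-Nishimori} lower bound.
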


\noindent
We also introduce a \emph{tension} 
\begin{equation}
\lambda_\mathbf{c}\equiv {[\Delta F_\mathbf{c}^{\rm
      max}]\over
  d_\mathbf{c}},
\quad d_\mathbf{c}\equiv
\min_{\boldsymbol\sigma}\wgt(\mathbf{c}+{\boldsymbol\sigma}
{\cal G}),\label{eq:line-tension-c}
\end{equation}
an analog of the
domain wall line tension for the extended defects,
and prove 
\begin{theorem}
  \label{th:tension-average}
  For disordered spin models \eqref{eq:Z0} (or Eq.\
  \eqref{eq:partition0}) corresponding to an infinite family of
  quantum codes with asymptotic rate $R=k/n$, in a fixed-defect phase,
  the defect tension ${\overline\lambda}$ averaged over all
  non-trivial defect classes at large $n$ satisfy the inequality
  $\beta{\overline\lambda}\ge R\ln 2$.
\end{theorem}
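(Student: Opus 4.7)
The plan is two successive applications of Jensen's inequality followed by a bound on a residual log-ratio of partition functions. First, fix a non-trivial class $\mathbf{c}\neq\mathbf{0}$. Convexity of $e^{-x}$ under the disorder average gives
\[
  [e^{-\beta\Delta F_\mathbf{c}^{\mathrm{max}}}]\;\ge\;e^{-\beta[\Delta F_\mathbf{c}^{\mathrm{max}}]}\;=\;e^{-\beta\lambda_\mathbf{c} d_\mathbf{c}},
\]
where the last equality is the definition~\eqref{eq:line-tension-c}. The left side equals $[Z_{\mathbf{c}_\mathrm{max}(\mathbf{e})+\mathbf{c}}/Z_\mathrm{max}]$; summing over the $2^k-1$ non-trivial classes and using that $\mathbf{c}\mapsto\mathbf{c}_\mathrm{max}(\mathbf{e})+\mathbf{c}$ is a bijection on those classes collapses the left-hand side to $[(Z_\mathrm{tot}-Z_\mathrm{max})/Z_\mathrm{max}]$, yielding the telescoped inequality
\[
  [Z_\mathrm{tot}/Z_\mathrm{max}]-1\;\ge\;\sum_{\mathbf{c}\neq\mathbf{0}}e^{-\beta\lambda_\mathbf{c} d_\mathbf{c}}.
\]

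Second, since any binary codeword has weight at most $n$ one has $d_\mathbf{c}\le n$, so each summand dominates $e^{-\beta n\lambda_\mathbf{c}}$. Applying Jensen's inequality once more, now to the uniform average over the $2^k-1$ classes, gives $\sum_{\mathbf{c}\neq\mathbf{0}}e^{-\beta n\lambda_\mathbf{c}}\ge(2^k-1)\,e^{-\beta n\overline\lambda}$. Combining and taking logarithms produces the preliminary bound
\[
  \beta\overline\lambda\;\ge\;\frac{\ln(2^k-1)-\ln[S]}{n}\;=\;R\ln 2+o(1)-\frac{\ln[S]}{n},
\]
with $[S]\equiv[Z_\mathrm{tot}/Z_\mathrm{max}]-1$.

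The main obstacle is controlling the residual $\ln[S]/n$. The fixed-defect hypothesis $[Z_\mathrm{max}/Z_\mathrm{tot}]\to 1$ together with $Z_\mathrm{tot}/Z_\mathrm{max}\ge 1$ forces $S\to 0$ in probability, but the expectation is a priori bounded only by $2^k-1$, which is not enough by itself. The remedy is self-averaging of the log-partition function in the fixed-defect phase: this gives $\ln[Z_\mathrm{tot}/Z_\mathrm{max}]\approx[\ln(1+S)]\to 0$ by bounded convergence of the (bounded) random variable $\ln(1+S)$, so that $[S]\to 0$ and $\ln[S]/n\le 0$ for $n$ large. The asymptotic inequality $\beta\overline\lambda\ge R\ln 2$ then follows in the limit $n\to\infty$.
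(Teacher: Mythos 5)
Your skeleton---convexity of $e^{-x}$ to turn $\sum_{\mathbf{c}\not\simeq\mathbf{0}}e^{-\beta[\Delta F^{\mathrm{max}}_\mathbf{c}]}$ into a ratio of partition functions, telescoping the sum to $[Z_\mathrm{tot}/Z_\mathrm{max}]-1$, a second Jensen step over the $2^k-1$ classes, and $d_\mathbf{c}\le n$---is essentially the same as the paper's. The gap is exactly where you suspect it, and your patch does not close it. The variable $S=Z_\mathrm{tot}/Z_\mathrm{max}-1$ is $\le\epsilon$ on likely (``good'') errors but can be as large as $2^{k}-1$ on unlikely (``bad'') ones, so $[S]\le\epsilon+P_\mathrm{bad}(2^{k}-1)$; the fixed-defect hypothesis (Def.~\ref{def:fixed}) only gives $P_\mathrm{bad}\to0$ at an unspecified rate, so nothing prevents $P_\mathrm{bad}2^{k}$ from growing exponentially, e.g.\ $[1-Z_\mathrm{max}/Z_\mathrm{tot}]\sim 1/n$ yields $[S]\sim 2^{k}/n$ and $\ln[S]/n\to R\ln2$, wiping out the bound entirely. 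Your remedy is self-averaging of the log-partition function, but the paper explicitly states that self-averaging cannot be guaranteed for these non-local models (fluctuations of $Z$ can be exponentially large even for the toric code) and emphasizes that none of its proofs rely on it. Moreover, even granting the identification $\ln[1+S]\approx[\ln(1+S)]$, the ``bounded convergence'' step is unavailable: $\ln(1+S)$ is bounded only by $O(k)=O(n)$, not uniformly, so $[\ln(1+S)]\to0$ would again require $P_\mathrm{bad}\,k\to0$; and without that identification, $[\ln(1+S)]\to0$ implies nothing about $[S]$, since Jensen runs the wrong way there.

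The paper's fix is to reverse the order of operations: restrict to the good errors \emph{before} exponentiating, apply $\exp\langle f\rangle\le\langle\exp f\rangle$ to the conditional averages $[\Delta F^\mathrm{max}_\mathbf{c}]_\mathrm{good}$ only, which gives $\sum_{\mathbf{c}\not\simeq\mathbf{0}}\exp\bigl(-\beta[\Delta F^\mathrm{max}_\mathbf{c}]_\mathrm{good}\bigr)\le\epsilon$ and hence $\beta\overline\lambda^{(\epsilon)}\ge R\ln2$ for the good-conditioned tensions. The bad errors are then reinstated at the level of the tension itself rather than of $e^{-\beta\Delta F}$: by Lemma~\ref{lemma:Fmax-bounds} one has $0\le\Delta F^\mathrm{max}_\mathbf{c}\le2d_\mathbf{c}$, so every conditional tension lies in $[0,2]$, and the Bayesian decomposition of $\lambda_\mathbf{c}$ into good and bad parts differs from $\lambda^{(\epsilon)}_\mathbf{c}$ by $O(P_\mathrm{bad})$, which vanishes. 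If you condition first and use this uniform bound to absorb the bad errors, your argument becomes correct and coincides with the paper's proof.
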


\subsection{Results: order parameter}
\ The spin models corresponding to families of quantum codes include
the analogs of regular Ising model (e.g., regular Ising model on
square lattice for the toric codes) as well as various gauge models,
see Example \ref{ex:gauge}.  In general, there is no local order
parameter that can be used for an alternative definition of the
ordered phase.  In addition, while an analog of \emph{Wilson loop}
operator can be readily constructed for these models and has the usual
low- and high-temperature asymptotics, it remains an open question
whether it can be used to distinguish between specific disordered phases.

However, we constructed a set of non-local \emph{indicator} spin
correlation functions which must all be asymptotically equal to one in
the defect-free phase, while some of them change sign in the presence
of extended defects.  Using these, and the standard inequalities from
the gauge theory of spin glasses, we prove the following bound on the
location of the defect-free phase (this is an extension of Nishimori's
result\cite{Nishimori-1980,Nishimori-1981} on possibly reentrant phase
diagram for Ising models):
\begin{theorem}
  \label{th:boundary} Defect-free phase cannot exist at any $\beta$
for $p$ exceeding that at the decoding transition, $p>p_c$.
\end{theorem}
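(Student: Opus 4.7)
The plan is to deploy the non-local indicator spin correlation functions constructed in the preceding subsection: by their defining property they tend (in disorder average) to $1$ in the defect-free phase and are bounded away from $1$ whenever extended defects proliferate. The theorem then reduces to showing that, at any fixed $p>p_c$, these indicators cannot all approach $1$ for any inverse temperature $\beta$.

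First, I would establish a Nishimori-type gauge inequality that transports information from the Nishimori line to the full $(\beta,p)$-plane. Concretely, writing the disorder average at Bernoulli probability $p$ as $[\,\cdot\,]$ and using the Wegner gauge symmetry $S_r\to\sigma_r S_r$ of the partition function $\mathscr{Z}_{\mathbf{e},\mathbf{0}}(\Theta;\{K_b=\beta\})$ (which shifts $\mathbf{e}$ within the row span of $\Theta$, leaving the indicator invariant by construction), one can factor the Bernoulli measure at generic $\beta$ as a Nishimori-line measure at $\beta_p$ times a gauge-summable square. A Cauchy--Schwarz step then yields
\begin{equation*}
\bigl|[\mathscr{R}_{\mathbf{c}}(\mathbf{e};\beta)]\bigr|^{2}
\;\le\;\bigl|[\mathscr{R}_{\mathbf{c}}(\mathbf{e};\beta_{p})]\bigr|,
\end{equation*}
valid uniformly in $\beta$ and in $n$, for every codeword class $\mathbf{c}$.

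Second, I would invoke the characterization of the decoding threshold: for $p>p_{c}$ the averaged ML success probability $P_{\mathrm{succ}}=\sum_{\mathbf{s}}Z_{\mathrm{max}}(\mathbf{s};\beta_{p})$ does not approach $1$. Translated into the language of the indicator functions (which is exactly how the defect-free phase was characterized on the Nishimori line via Theorem~\ref{th:Nishimori-line-defect-free}), this means that on the Nishimori line there is some non-trivial class $\mathbf{c}\not\simeq\mathbf{0}$ with $[\mathscr{R}_{\mathbf{c}}(\mathbf{e};\beta_{p})]$ bounded away from $1$ as $n\to\infty$. Feeding this into the gauge inequality above yields that $[\mathscr{R}_{\mathbf{c}}(\mathbf{e};\beta)]$ is bounded away from $1$ for \emph{every} $\beta$, which contradicts the defining condition of the defect-free phase and proves the theorem.

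The main obstacle is the first step: establishing the gauge inequality for the generalized Wegner-type partition function with multi-spin bonds. For the random-bond Ising model this is standard textbook Nishimori, but here one must verify that the indicator correlators are invariant under the full spin-flip gauge group (whose orbits coincide with shifts of $\mathbf{e}$ by the row space of $\Theta$, i.e.\ by stabilizer elements, not by codewords), and that the Cauchy--Schwarz pairing between the Nishimori-line Boltzmann weight and the off-Nishimori correction respects the codeword equivalence $\mathbf{c}\not\simeq\mathbf{0}$ labelling the indicator. The construction of the indicator functions in the preceding subsection appears tailored to this compatibility, so once the gauge-covariance book-keeping is in place the argument should proceed in direct analogy with Nishimori's original proof of the absence of a reentrant ferromagnetic phase.
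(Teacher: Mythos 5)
Your proposal follows essentially the same route as the paper: your anticipated gauge inequality is exactly the paper's Lemma~\ref{lemma:Nishimori-identities} combined with Cauchy--Schwarz, yielding $[Q_\mathrm{tot}^{\mathbf{m}}(\mathbf{e};\beta)]^2\le[Q_\mathrm{tot}^{\mathbf{m}}(\mathbf{e};\beta_p)]$, and the paper then sums this over all $2^{2k}$ dual codewords so that the right-hand side becomes $2^{2k}$ times the Nishimori-line success probability, rather than extracting a single class bounded away from $1$ as you do --- a cosmetic difference. One note on the ``obstacle'' you flag: the relevant indicator $Q_\mathrm{tot}^{\widetilde{\mathbf{b}}}$ is a correlator of the model built on $\widetilde G^{*}$, so the Nishimori gauge orbit of $\mathbf{e}$ runs over \emph{all} zero-syndrome vectors --- stabilizer elements and codewords alike, the latter producing precisely the sign $(-1)^{\mathbf{c}\cdot\widetilde{\mathbf{b}}}$ that lets the indicator resolve the dominant class while still closing the gauge identity onto $Z_\mathrm{tot}$.
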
 

\subsection{Results:\ phase transition}
\ For zero-$R$ codes, the only mechanism of a continuous transition is
for $\lambda_\mathbf{c}$ to vanish for some set of codewords
$\mathbf{c}$.  On the other hand, for finite-rate codes, Theorem
\ref{th:tension-average} implies that there is also a possibility
that at the transition point the tension remains finite,
$\lambda_\mathbf{c}\ge \lambda_\mathrm{min}>0$, for every codeword
$\mathbf{c}$.  This corresponds to a transition driven by the entropy
of extended defects.

While generically the transition in models with multi-spin couplings
is of the first order, it is continuous along the Nishimori line since
the corresponding internal energy is known exactly and is a continuous
function of $p$.  Moreover, the specific heat remains finite at the
transition point along the Nishimori line since the same inequality as
for regular spin glasses applies\cite{Nishimori-1980,%
  Morita-Horiguchi-1980,Nishimori-1981,Nishimori-book},  
\begin{equation}
  \label{eq:specific-heat-bound}
  [C(p;\beta_p]\le  N_\mathrm{b} {\beta_p^2\over \cosh^2\beta_p},
\end{equation}
where $N_\mathrm{b}=2n$ for the model \eqref{eq:partition0}, and
$N_\mathrm{b}=n$ for the models \eqref{eq:Z0} corresponding to a half
of a CSS code each.  Thus, as in the usual spin models, we expect that
the transition point $p=p_c$ at the Nishimori line is a multicritical
point where several phases come together.

Spin models corresponding to non-CSS zero-rate families of stabilizer
codes are exactly self-dual.  The same is true for CSS codes where the
two generator matrices ${\cal G}_X$, ${\cal G}_Z$ can be mapped to
each other, e.g., by column permutations, as is the case for the toric
codes and, more generally, for the hypergraph-product (HP)
codes\cite{Tillich-Zemor-2009}.  For many such models, the transition
point at the Nishimori line can be obtained to a high numerical
accuracy using the strong-disorder self-duality
conjecture\cite{Nishimori-1979,Nishimori-Nemoto-2003,Nishimori-2007,%
  Nishimori-Ohzeki-2006,Ohzeki-Nishimori-Berker-2008,%
  Ohzeki-2009,Bombin-PRX-2012,Ohzeki-Fujii-2012}
\begin{equation}
  \label{eq:self-duality-disordered}
H_2(p_c)=1/2,
\end{equation}
where $H_2(p)\equiv -p\log_2p-(1-p)\log_2(1-p)$ is the binary entropy
function.  While strictly speaking, there is no exact self-duality in
the presence of disorder\cite{Aharony-Stephen-1980}, we have confirmed
numerically that this expression is also valid, at least
approximately, for several models constructed here, e.g., models with
bond structure as in 
Example~\ref{ex:882-18-12}.

However, for code families with finite rate, the decoding transition
must be below the Shannon limit
\begin{equation}
R\le 1-H_2(p).\label{eq:shannon-threshold-CSS}
\end{equation}
Thus, Eq.\ \eqref{eq:self-duality-disordered} must be violated for
$R\ge1/2$.  On general grounds,  we actually  expect it to fail for
any code family 
with a finite rate, $R>0$.

\section{Background}
\label{sec:background}

\subsection{Stabilizer codes}
\label{sec:background-codes}
\ An $n$-qubit quantum code\cite{shor-error-correct,gottesman-thesis,%
  Calderbank-1997,Nielsen-book} is a subspace of the $n$-qubit Hilbert
space $\mathbb{H}_{2}^{\otimes n}$.  The idea is to choose a subspace
such that a likely error shifts any state from the code to a
linearly-independent subspace, to be detected with a suitable set of
measurements.  Any error, an operator acting on
$\mathbb{H}_{2}^{\otimes n}$, can be expanded as a linear combination
of the elements of the $n$-qubit Pauli group $\mathscr{P}_{n}$ formed
by tensor products of single-qubit Pauli operators $X$, $Y$, $Z$ and
the identity operator $I$: $\mathscr{P}_{n}=i^m \{I,X,Y,Z\}^{\otimes
  n}$, where $m=0,1,2,3$.  A \emph{weight} of a Pauli operator is the
number of non-trivial terms in the tensor product.

An $n$-qubit quantum \emph{stabilizer code} $\mathcal{Q}$ $[[n,k,d]]$
is a $2^k$-dimensional subspace of $\mathbb{H}_2^{\otimes n}$, a
common $+1$ eigenspace of all operators in the code's
\emph{stabilizer}, an Abelian group
$\mathscr{S}\subset\mathscr{P}_{n}$ such that
$-\openone\not\in\mathscr{S}$.  The stabilizer is typically specified
in terms of its generators, $\mathscr{S}=\left\langle
  S_{1},\ldots,S_{n-k}\right\rangle $.  Any operator proportional to
an element of the stabilizer $\mathscr{S}$ acts trivially on the code
and can be ignored.  A non-trivial error proportional to a Pauli
operator $E\not\in\mathscr{S}$ is detectable iff it anticommutes with
at least one stabilizer generator $S_i$; such an error takes a vector
from the code, $\ket\psi\in{\cal Q}$, to the state $E\ket\psi$ from an
orthogonal subspace $E{\cal Q}$ where the corresponding eigenvalue
$(-1)^{s_i}$ is negative.  Measuring all $n-k$ generators $S_i$
produces the binary syndrome vector $\mathbf{s}\equiv
\{s_1,\ldots,s_{n-k}\}$.  Two errors (Pauli operators) that differ by
an element of the stabilizer and a phase, $E_2=E_1 S e^{i\phi}$,
$S\in\mathscr{S}$, are called mutually degenerate; they have the same
syndrome and act identically on the code.

Operators commuting with the stabilizer act within the code; they have
zero syndrome.  A non-trivial undetectable error $E$ is proportional
to a Pauli operator which commutes with the stabilizer but is not a
part of the stabilizer.  These are the operators that damage quantum
information; minimal weight of such an operator is the
distance $d$ of the stabilizer code.  A quantum or classical code of
distance $d$ can detect any error of weight up to $d-1$, and correct
up to $\lfloor d/2\rfloor$.  

A Pauli operator $E\equiv i^{m'}X^{\mathbf{v}}Z^{\mathbf{u}}$, where
$\mathbf{v},\mathbf{u}\in\{0,1\}^{\otimes n}$ and
$X^{\mathbf{v}}=X_{1}^{v_{1}}X_{2}^{v_{2}}\ldots X_{n}^{v_{n}}$,
$Z^{\mathbf{u}}=Z_{1}^{u_{1}}Z_{2}^{u_{2}}\ldots Z_{n}^{u_{n}}$, can
be mapped, up to a phase, to a binary vector $\mathbf{e}\equiv
(\mathbf{v},\mathbf{u})$.  A product of two Pauli operators
corresponds to a sum ($\mod2$) of the corresponding vectors. Two Pauli
operators commute if and only if the \emph{trace inner product} of the
corresponding binary vectors is zero,
 $\mathbf{e}_1\star \mathbf{e}_2\equiv
 \mathbf{u}_{1}\cdot\mathbf{v}_{2}+\mathbf{v}_{1}\cdot\mathbf{u}_{2}=0
 \bmod 2$. With this map, generators of a stabilizer group are mapped
to rows of the binary generator matrix
\begin{equation}
G=(G_{X},G_{Z}),\label{eq:generator-matrix}
\end{equation}
with the
condition 
that the trace inner product of any two rows vanishes
\cite{Calderbank-1997}.  This commutativity condition can be also
written as $G_X G_Z^T+G_Z G_X^T=0$.

For a more narrow set of CSS codes stabilizer generators can be chosen
so that they contain products of only $X_i$ or $Z_i$ single-qubit
Pauli operators.  The corresponding generator matrix has the form
\begin{equation}
  G=\diag({\cal G}_X,{\cal G}_Z),
  \label{eq:CSS}
\end{equation}
where the commutativity condition simplifies to ${\cal G}_{X}{\cal G}_{Z}^{T}=0
\bmod2$.  The number of encoded qubits is $k=n-\rank G$; for CSS codes
this simplifies to $k=n-\rank
{\cal G}_X-\rank {\cal G}_z$.

Two errors are mutually degenerate iff the the corresponding binary
vectors differ by a linear combination of rows of $G$,
$\mathbf{e}'=\mathbf{e}+\boldsymbol{\alpha} G$.  It is convenient to
define the conjugate matrix $\widetilde G\equiv (G_Z,G_X)$ so that
$G\star G^T\equiv G \widetilde G^T=0$.  Then, the syndrome of an error
$\mathbf{e}\equiv (\mathbf{v},\mathbf{u})$ can be written as the
product with the conjugate matrix, $\mathbf{s}=\widetilde G
\mathbf{e}^T$.  A vector with zero syndrome is orthogonal to rows of
$\widetilde G$; we will call any such vector which is not a linear
combination of rows of $G$ a non-zero \emph{codeword}
$\mathbf{c}\not\simeq\mathbf{0}$.  Two codewords that differ by a
linear combination of rows of $G$ are equivalent,
$\mathbf{c}_1\simeq\mathbf{c}_2$; corresponding Pauli
operators are mutually degenerate.  Non-equivalent codewords represent
different cosets of the degeneracy group in the binary code with the
check matrix $\widetilde G$.  For an $[[n,k,d]]$ code, any non-zero
codeword has weight $\wgt(\mathbf{c})\ge d$, and there are exactly
$2k$ \emph{independent} codewords which can be chosen to correspond to
$2k$ operators $\bar X_i$, $\bar Z_i$, $i=1,\ldots,k$ (with the usual
commutation relations) acting on the logical qubits.

\subsection{LDPC codes}
\ A binary low density parity-check (LDPC) code is a linear code with
sparse parity check matrix\cite{Gallager-1962,%
  Richardson-Shokrollahi-Amin-Urbanke-2001,%
  Richardson-Urbanke-2001,Chung-Forney-Richardson-Urbanke-2001}.
These have fast and efficient (capacity-approaching) decoders. Over
the last ten years classical LDPC codes have become a significant
component of industrial standards for satellite communications, Wi-Fi,
and gigabit ethernet, to name a few. 
\emph{Quantum LDPC codes}\cite{Postol-2001,MacKay-Mitchison-McFadden-2004}
are just stabilizer codes\cite{gottesman-thesis,Calderbank-1997}, but
with stabilizer generators which involve only a few qubits each
compared to the number of qubits used in the code.  Such codes are
most often degenerate: some errors have trivial effect and do not
require any correction. Compared to general quantum codes, with a
quantum LDPC code, each quantum measurement involves fewer qubits,
measurements can be done in parallel, and also the classical
processing could potentially be enormously simplified.

One apparent disadvantage of quantum LDPC codes is that, until
recently\cite{Bravyi-Hastings-2013},  there has been no
known families of such codes that have finite relative distance
$\delta\equiv d/n$ for large $n$.  
This is in contrast to regular
quantum codes where the existence of ``good'' codes with finite
asymptotic rates $R\equiv k/n$ and finite $\delta$ has been
proved\cite{Calderbank-Shor-1996,Feng-Ma-2004}.  With such latter
codes, and within a model where errors happen independently on
different qubits with probability $p$, for $p<\delta/2$ all errors can
be corrected with probability one.  On the other hand, many quantum
LDPC code families have a power-law
scaling of the distance with $n$, $d\propto n^{\alpha}$, with
$\alpha\le 1/2$. Examples include code families in
Refs.~\cite{Tillich-Zemor-2009,Kovalev-Pryadko-2012,%
  Kovalev-Pryadko-Hyperbicycle-2013,Andriyanova-Maurice-Tillich-2012};
a single-qubit-encoding code family suggested in
Ref.~\cite{Freedman-Meyer-Luo-2002} has the distance scaling as
$d\propto (n\log n)^{1/2}$.  

An infinite quantum LDPC code family with sublinear power-law distance
scaling has a finite error correction threshold, including the
fault-tolerant case where the measured syndromes may have errors, as
long as each stabilizer generator involves a limited number of qubits,
and each qubit is involved in a limited number of stabilizer
generators\cite{Kovalev-Pryadko-FT-2013}.  This makes quantum LDPC
codes the only code family where finite rate is known to coexist with
finite fault-tolerant error-correction threshold, potentially leading
to substantial reduction of the overhead for scalable quantum
computation\cite{Gottesman-overhead-2013}.

Note that the quantum LDPC codes in
Ref.~\cite{Bravyi-Hastings-2013} have finite rate and finite relative
distance, at the price of stabilizer generator weight scaling like a
power-law, $w\propto n^{\gamma}$, $\gamma\le 1/2$; it is not known
whether a fault-tolerant error-correction protocol  exists
for such codes.



An example of a large code family containing quantum LDPC codes is the 
hypergraph-product (HP) codes \cite{Tillich-Zemor-2009} generalizing
the toric code. Such a code can be constructed from two binary matrices,
$\mat{H}_{1}$ (dimensions $r_{1}\times n_{1}$) and $\mat{H}_{2}$
(dimensions $r_{2}\times n_{2}$), as a CSS code with the generator
matrices \cite{Kovalev-Pryadko-2012}
\begin{equation}
  {\cal G}_{X}=(E_{2}\otimes\mathcal{H}_{1},\mathcal{H}_{2}\otimes
  E_{1}),\;\,
  {\cal G}_{Z}=(\mathcal{H}_{2}^{T}\otimes\widetilde{E}_{1},
  \widetilde{E}_{2}\otimes\mathcal{H}_{1}^{T}).
\label{eq:Till}  
\end{equation}
Here each matrix is composed of two blocks constructed as Kronecker
products (denoted with ``$\otimes$''), and $E_{1}$,
$\widetilde{E}_{1}$, $E_{2}$, $\widetilde{E}_{2}$ are unit matrices of
dimensions given by $r_{1}$, $n_{1}$, $r_{2}$ and $n_{2}$,
respectively. Let us denote the parameters of classical codes using
$\mat{H}_i$, $\mat{H}_i^T$ as parity check matrices,
$\mat{C}^\perp_{\mat{H}_{i}}=[n_{i},k_{i},d_{i}]$,
$\mat{C}^\perp_{\mat{H}_{i}^{T}}=
[{\widetilde{n}}_{i},\widetilde{k}_{i},\widetilde{d}_{i}]$, $i=1,2$,
with the convention\cite{Tillich-Zemor-2009} that the distance
$d=\infty$ if the corresponding $k=0$.
Then the parameters of the HP code are
$n=n_{2}r_{1}+n_{1}r_{2}$, 
$k=k_{1}\tilde{k}_{2}+\tilde{k}_{1}k_{2}$ while the distance $d$
satisfies\cite{Tillich-Zemor-2009} a lower bound 
$d\ge\min(d_{1},d_{2},\widetilde{d}_{1},\widetilde{d}_{2})$ and two
upper bounds: if 
$\widetilde{k}_{2}>0$, then $d\le d_{1}$; if 
$\widetilde{k}_{1}>0$, then $d\le d_{2}$.

Particularly simple is the case when both binary codes are
\emph{cyclic}, with the property that all cyclic shifts of a code
vector also belongs to the code\cite{MS-book}.  A parity check matrix
of such a code can be chosen \emph{circulant}, with the first row 
using the coefficients of 
the \emph{check}
polynomial $h(x)\equiv c_{0}+c_{1}x+\ldots+c_{n-1}x^{n-1}$ which is a
factor of $x^n-1$.  Then, we can choose both circulant matrices
$\mathcal{H}_{1}$ and $\mathcal{H}_{2}$ in Eq.\ \eqref{eq:Till} square
$n_i\times n_i$, which gives a CSS code with the parameters
$[[2n_1n_2,2k_1k_2,\min(d_1,d_2)]]$.  In particular, the toric
codes\cite{kitaev-anyons,Dennis-Kitaev-Landahl-Preskill-2002} are
obtained when the circulant matrices $\mathcal{H}_{1}$,
$\mathcal{H}_{2}$ are generated by the polynomial $h(x)=1+x$, with
$k_i=1$ and $d_i=n_i$, $i=1,2$.

\section{Statistical mechanics of decoding.}
\subsection{Maximum likelihood decoding}
\ Let us consider one of the simplest error models, where the bit flip
and phase flip errors happen independently and with equal probability
$p$.  The corresponding transformation of the single-qubit density
matrix can be written as
\begin{equation}
\rho\mapsto p_{I}\rho+p_{x}X\rho X+p_{y}Y\rho Y+p_{z}Z\rho Z\:,
\label{eq:decoh}
\end{equation}
where $p_{I}=(1-p)^2$, $p_x=p_z=p(1-p)$, $p_y=p^2$.  After relabeling
the axes ($y\leftrightarrow z$) this can be interpreted in terms of
the amplitude/phase damping model with some constraint on the
decoherence times $T_1$, $T_2$.  Our goal, however, is not to consider
the most general case, but to construct a simple statistical
mechanical model.

For the uncorrelated errors described by the completely-positive
trace-preserving map~\eqref{eq:decoh}, the probability of an error
described by the binary vector $\mathbf{e}=(\mathbf{v},\mathbf{u})$
(see the \textbf{Background} section) is
\begin{equation}
  \label{eq:error-probability}
  P(\mathbf{\mathbf{e}})=\prod_{i=1}^{N_{\rm b}} p^{e_i}(1-p)^{1-e_i}=
p^w(1-p)^{N_{\rm b}-w}, 
\end{equation}
where $N_{\rm b}=2n$ and $ w\equiv \wgt(\mathbf{e})=
\wgt(\mathbf{v})+\wgt(\mathbf{u})$ is the regular binary weight.  Now,
with a stabilizer code, all degenerate errors have the same effect and
cannot be distinguished.  Thus, one considers the net probability of
an error  having the same effect as $\mathbf{e}$,
\begin{equation}
  P_0(\mathbf{e})={1\over 2^{N_{g}}}\sum_{\boldsymbol{\sigma}} p^w
  (1-p)^{N_{\rm b}-w},\quad w\equiv \wgt(\mathbf{e}+\boldsymbol{\sigma} G),
\label{eq:prob0}
\end{equation}
where the generator matrix $G$ (see Eq.\ \eqref{eq:generator-matrix})
has dimensions $N_s\times N_{\rm b}$ and non-zero $N_g\equiv N_s-\rank G$
allows $G$ to have some linearly-dependent rows, cf.\ Eq.\
\eqref{eq:generalized-wegner}.  The errors in Eq.\ \eqref{eq:prob0} are
exactly degenerate with $\mathbf{e}$ but they are not all the errors
having the same syndrome as $\mathbf{e}$.  It is thus convenient to
introduce the probability of an error equivalent to $\mathbf{e}$
shifted by a codeword $\mathbf{c}$,
\begin{equation}
  P_\mathbf{c}(\mathbf{e})\equiv P_0(\mathbf{e}+\mathbf{c}),
\label{eq:prob-c}
\end{equation}
and the total probability of an error with the syndrome
$\mathbf{s}\equiv \widetilde G\mathbf{e}^T$, 
\begin{equation}
  \label{eq:prob-tot}
    P_{\rm tot}(\mathbf{s})=\sum_{\mathbf{c}}P_\mathbf{c}(\mathbf{e}),
\end{equation}
where $\mathbf{e}$ is any vector that gives the syndrome $\mathbf{s}$,
and the summation is done over all $2^{2k}$ inequivalent codewords,
length $N_{\rm b}$ zero-syndrome vectors, $\widetilde G \mathbf{c}^T=0$, that
are linearly independent from the rows of $G$, see
the \textbf{Background} section.  When combined with the summation
over the degeneracy vectors generated by the rows of $G$, see Eqs.\
\eqref{eq:prob0} and \eqref{eq:prob-c}, the
summation in Eq.\ \eqref{eq:prob-tot} can be rewritten as that over all
zero-syndrome vectors,
\begin{equation}
  \label{eq:prob-tot-mod}
  P_\mathrm{tot}(\mathbf{s})=\!\!\!\!\sum_{\mathbf{x}:\widetilde G \mathbf{x}^T=0}\!\!\!\!p^w
    (1-p)^{N_{\rm b}-w},\;\, w\equiv \wgt(\mathbf{e}+\mathbf{x}).
\end{equation}
The probability \eqref{eq:prob-tot-mod} is normalized properly, so that the
summation over all allowed syndrome vectors gives 1,
\begin{equation}
  \label{eq:prob-tot-summed}
  \sum_\mathbf{s}P_\mathrm{tot}(\mathbf{s})=1.
\end{equation}

When decoding is done, only the measured syndrome
$\mathbf{s}$ is known.  
For \emph{maximum likelihood} (ML) decoding, the inferred error vector
corresponds to the most likely configuration given the syndrome.  To
find it, we can start with some error configuration $\mathbf{e}\equiv
\mathbf{e}_\mathbf{s}$ corresponding to the syndrome $\mathbf{s}$, and
find a codeword $\mathbf{c}=\mathbf{c}_\mathrm{max}(\mathbf{e})$ such
that the corresponding equivalence class $\mathbf{e}+\mathbf{c}$ has
the largest probability,
\begin{equation}
P_{\mathbf{c}_\mathrm{max}(\mathbf{e})}(\mathbf{e})= P_\mathrm{max}(\mathbf{s})\equiv
\max_\mathbf{c} P_\mathbf{c}(\mathbf{e}).\label{eq:P-max}
\end{equation}
Unlike the codeword $\mathbf{c}_\mathrm{max}(\mathbf{e})$ which
depends on the choice of $\mathbf{e}$, the maximum
probability $P_\mathrm{max}(\mathbf{s})$ depends only on the syndrome
$\mathbf{s}\equiv \widetilde G\mathbf{e}^T$.  The conditional
probabilities of successful and of failed recovery given some unknown
error with the syndrome $\mathbf{s}$ become
\begin{equation}
  \label{eq:P-recovery-of-s}
  P_\mathrm{succ}(\mathbf{s}) ={P_\mathrm{max}(\mathbf{s})\over
    P_\mathrm{tot}(\mathbf{s})},\quad
  P_\mathrm{fail}(\mathbf{s})\equiv 1-P_\mathrm{succ}(\mathbf{s}). 
\end{equation}
The net probability of successful recovery averaged over all
errors can be written as 
\begin{equation}
  \label{eq:prob-recovery}
  P_\mathrm{succ}\equiv [  P_\mathrm{succ}(\mathbf{s}_\mathbf{e})]
=\sum_\mathbf{s}P_\mathrm{max}(\mathbf{s}).  
\end{equation}
Here and in the following $[f(\mathbf{e})]\equiv
\sum_\mathbf{e}P(\mathbf{e}) f(\mathbf{e})$ denotes the averaging over
the errors with the probability \eqref{eq:error-probability}.  The
result in the r.h.s.\ was obtained by partial summation over all
errors with the same syndrome, cf.\ the syndrome
probability \eqref{eq:prob-tot}.

Asymptotically successful recovery with probability one for an
infinite family of QECCs implies that in the limit of large $n$,
$P_\mathrm{succ}\to1$ while $P_\mathrm{fail}\to0$.  Alternatively, in
this limit Eqs.\ \eqref{eq:P-recovery-of-s} and \eqref{eq:prob-recovery} give
\begin{equation}\label{eq:max-phase}
  \left[\dfrac{P_{\rm max}(\mathbf{s}_{\bf e})}%
      {P_{\rm tot}(\mathbf{s}_{\bf e})}\right]\to1. 
\end{equation}
Comparing
Eqs.\ \eqref{eq:prob-tot-summed} and \eqref{eq:prob-recovery}, we see
that asymptotically, for each error that is likely to happen, the sum
\eqref{eq:prob-tot} is dominated by a single term with
$\mathbf{c}=\mathbf{c}_\mathrm{max}(\mathbf{e})$.
We can state this formally as 
\begin{lemma}
  \label{lemma:upside-down}
  For an infinite family of quantum codes, successful decoding with
  probability one implies that asymptotically
  at large $n$,  the ratio
  $$
  r(\mathbf{e})\equiv {  P_\mathrm{max}(\mathbf{s}_\mathbf{e})\over
    P_\mathrm{tot}(\mathbf{s}_\mathbf{e})}= 
  {P_\mathrm{max}({\bf e})\over \sum_{\bf c} P_{\bf c}({\bf s}_{\bf
      e})}\to1.  
  $$
  for any error configuration $\mathbf{e}$ likely to happen.
\end{lemma}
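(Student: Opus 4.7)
The plan is to derive the pointwise (in $\mathbf{e}$) statement $r(\mathbf{e})\to 1$ from the disorder-averaged identity already established, exploiting only that $r$ is a bounded random variable on the unit interval. From \eqref{eq:P-recovery-of-s} the ratio satisfies $0\le r(\mathbf{e})\le 1$ for every $\mathbf{e}$, and from \eqref{eq:prob-recovery} its average over the error distribution \eqref{eq:error-probability} is exactly $P_\mathrm{succ}$. Under the hypothesis of asymptotically certain ML decoding, $[\,r(\mathbf{e})\,]=P_\mathrm{succ}\to 1$ as $n\to\infty$, so equivalently the nonnegative deficit satisfies $[\,1-r(\mathbf{e})\,]\to 0$.

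The core step is then a one-line Markov inequality argument. For every fixed $\epsilon>0$,
\[
\sum_{\mathbf{e}\,:\,r(\mathbf{e})<1-\epsilon} P(\mathbf{e}) \;\le\; \frac{[\,1-r(\mathbf{e})\,]}{\epsilon} \;=\; \frac{1-P_\mathrm{succ}}{\epsilon} \;\longrightarrow\; 0,
\]
so the total probability of the error configurations on which $r$ fails to approach $1$ vanishes in the large-$n$ limit. Reading ``likely to happen'' as membership in a sequence of sets whose complement carries vanishing probability, this is exactly the assertion of the lemma.

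The only genuine subtlety is the interpretation of ``likely'': the convergence extracted is not uniform in $\mathbf{e}$, and pathological rare errors may well retain $r(\mathbf{e})$ bounded away from $1$ at every $n$. What the argument provides is convergence in probability under the error measure \eqref{eq:error-probability}, which is precisely how the lemma is to be used in the subsequent section (e.g.\ in the proof of Theorem \ref{th:Nishimori-line-defect-free}). No structural input from the stabilizer code enters beyond the normalization $P_\mathrm{succ}(\mathbf{s})\le 1$; the rest is the general principle that a bounded random variable whose mean saturates its upper bound must concentrate at that bound.
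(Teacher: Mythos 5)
Your argument is correct and is essentially identical to the paper's: the authors' split into ``good'' and ``bad'' errors with the Bayesian expansion $P_\mathrm{succ}\le 1-\epsilon P_\mathrm{bad}$ is precisely the Markov inequality you apply to the bounded nonnegative variable $1-r(\mathbf{e})$, and both arguments conclude convergence in probability under the error measure. Your remark on the interpretation of ``likely to happen'' matches how the lemma is used later, so nothing further is needed.
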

\begin{proof}
  Note that $r(\mathbf{e})< 1$. Indeed, the summation in the denominator is over all
  $\mathbf{c}$, one of them equals
  $\mathbf{c}_\mathrm{max}(\mathbf{e})$ while the remaining terms are positive.  Now, let us choose an arbitrarily small
  $\epsilon>0$ and separate the errors into ``good'' where
  $1-r(\mathbf{e})<\epsilon$ and ``bad'' where $1-r(\mathbf{e})\ge
  \epsilon$.  Use the following Bayesian expansion for the successful
  decoding probability:
  \begin{equation}
    \label{eq:Bayesian-expansion}
P_\mathrm{succ}=(1-P_\mathrm{bad}) \left[
      r(\mathbf{e})\right]_\mathrm{good}\!\!\!\!\!\!\!
  +P_\mathrm{bad}\left[
      r(\mathbf{e})\right]_\mathrm{bad}\!\!\!\!\!\!,
\end{equation}
where the averaging in each term is limited to a particular type of
errors as indicated.  The first term can be bounded from above by
$1-P_\mathrm{bad}$, while the second one
by $P_\mathrm{bad}(1-\epsilon)$, which gives  
\begin{equation}
    P_\mathrm{succ}\le
    1-\epsilon P_\mathrm{bad}.
    \label{eq:P-succ}
\end{equation}
Since
$P_\mathrm{fail}=1-P_\mathrm{succ}\to0$ at large $n$, the probability $P_\mathrm{bad}$
can be made arbitrarily small by choosing large enough
$n$.
\end{proof}

Generally,  given an infinite family of codes, asymptotically certain
recovery is possible with sufficiently small $p<p_c\le1/2$, as well in
the symmetric region $p>1-p_c$, while it may not be a sure thing in
the remaining interval $p_c\le p\le 1-p_c$.  This defines the ML
decoding transition. 

\subsection{Random bond spin model}
\label{sec:spin-model}
\ Given the well-established parallel between Wegner's models and binary
codes\cite{Sourlas-1989,Nishimori-book}, it is straightforward to come
up with a spin model matching the probabilities defined in the
previous section.  We use the binary error $\mathbf{e}$ to introduce
the bond disorder using $J_b=(-1)^{e_b}$, and consider Wegner's
partition function \eqref{eq:generalized-wegner} with $\Theta=G$,
\begin{equation}
  \label{eq:partition0}
  Z_0(\mathbf{e};\beta)\equiv \mathscr{Z}_{\mathbf{e},\mathbf{0}}(G,\{K_b=\beta\}).
\end{equation}
The normalization is 
such that the probability in Eq.\ \eqref{eq:prob0} is recovered on the
Nishimori line \eqref{eq:nishimori-temperature},
\begin{equation}
  \label{eq:nishimori-line}
  P_0(\mathbf{e})=Z_0(\mathbf{e};\beta_p),\quad e^{-2\beta_p}=p/(1-p).
\end{equation}
To shorten the notations, we will omit the inverse temperature $\beta$
whenever it is not likely to cause a confusion, $
Z_0(\mathbf{e})\equiv Z_0(\mathbf{e};\beta)$, and use
$P_0(\mathbf{e})$ at the Nishimori line, $\beta=\beta_p$.

We also define the partition function with an \emph{extended defect}
of flipped bonds at the support of the codeword $\mathbf{c}$,
$Z_\mathbf{c}(\mathbf{e};\beta)\equiv
Z_\mathbf{0}(\mathbf{e}+\mathbf{c};\beta)$ [cf.\
  Eq.\ \eqref{eq:prob-c}], the corresponding maximum
$Z_\mathrm{max}(\mathbf{s};\beta)\equiv
Z_{\mathbf{c}_\mathrm{max}}(\mathbf{e};\beta)$ [the maximum is reached
  at
  $\mathbf{c}_\mathrm{max}\equiv\mathbf{c}_\mathrm{max}(\mathbf{e};\beta)$
  which may differ from that in Eq.\ \eqref{eq:P-max} depending on the
  temperature], as well as an analog of $P_\mathrm{tot}(\mathbf{s})$
[Eq.\ \eqref{eq:prob-tot}],
\begin{equation}
  \label{eq:Z-tot}
  Z_\mathrm{tot}(\mathbf{s};\beta)=\mathscr{Z}_{\mathbf{e},\mathbf{0}}(\widetilde G^*,\{K_b=\beta\}), 
\end{equation}
where the binary
matrix $\widetilde G^*$ is exactly dual to $\widetilde G$, namely
$\widetilde G^*\widetilde G^{T}=0$ and $\rank \widetilde G+\rank
\widetilde G^*=N_{\rm b}$ (cf.\ Eq.\ \eqref{eq:dual-code}),  and we used the fact that 
$\widetilde G^*$ is a generating matrix for all vectors $\mathbf{x}$
in Eq.\ \eqref{eq:prob-tot-mod}.

Except for disorder, the partition function \eqref{eq:Z-tot} is
related to Eq.\ \eqref{eq:partition0} by Wegner's duality transformation
\cite{Wegner71},
\begin{equation}
 \dfrac{2^{(N_g-N_{s})/2}\mathscr{Z}_{\mathbf{e},\mathbf{0}}(\Theta,\{K\})}{ \prod_{b}\sqrt{(\tanh 
    K_{b})^{2}+1}}=
\dfrac{2^{(N_g^*-N_{s}^{*})/2}\mathscr{Z}_{\mathbf{0},\mathbf{e}}(\Theta^*,\{K^{*}\})}{
  \prod_{b}\sqrt{(\tanh  K_{b}^{*})^{2}+1}},
\label{eq:duality}
\end{equation}
where bonds are defined by the columns of a $N_s^*\times N_\mathrm{b}$ binary
matrix $\Theta^*$ exactly dual to $\Theta$, see
Eqs.\ \eqref{eq:dual-code} and \eqref{eq:generalized-wegner}.
The dual model
has the same number of bonds, $N_{\rm b}^{*}=N_{\rm b}$, $N_s^*$ spins, and its
ground state degeneracy parameter $N_{g}^{*}=N_{s}^{*}-\rank
\Theta^{*}$.  The coupling parameters of mutually dual bonds  are
related by $\tanh K_{b}=\exp(-2K_{b}^{*})$.  The conjugation in
Eq.\ \eqref{eq:Z-tot} just
rearranges the order of bonds and therefore leaves the partition/correlation
function invariant, except for corresponding permutation of
bond-specific variables: coupling parameters $K_b$ and electric
and magnetic charges, 
\begin{equation}
  \mathscr{Z}_{\mathbf{e},\mathbf{m}}(\widetilde G^*,\{K\})
  =\mathscr{Z}_{\widetilde{\mathbf{e}},\widetilde{\mathbf{m}}}
  (G^*,\{\widetilde
  K\}). \label{eq:conjugation}
\end{equation}

We note in passing that the  binary matrices $\Theta$ and
$\Theta^*$ defining the mutually dual partition functions in
Eq.\ \eqref{eq:duality} can be also thought of as the generating
matrices of the two dual binary codes [Eq.\ \eqref{eq:dual-code}], with
some additional linearly dependent rows.  In fact, Wegner's duality
has been long known in the coding theory as the MacWilliams identities
between weight generating polynomials of dual
codes\cite{MacWilliams-1963,MS-book}.

For a CSS code with the generator matrix in the form \eqref{eq:CSS} the
partition function \eqref{eq:partition0} splits into a product of
those for two non-interacting models corresponding to matrices ${\cal
  G}_{X}$ and ${\cal G}_{Z}$,
see Eq.\ \eqref{eq:Z0}.  In addition, two models defined by 
${\cal G}_{X}$ and ${\cal G}_{Z}$ are dual to each other modulo logical
operators.  We can find the ground state degeneracies $2^{N_g^\mu}$,
$\mu=X,Z$, of
the corresponding models from $N_g^\mu=N_s^\mu-\rank {\cal G}_\mu$, where
$N_s^\mu$, $\mu=X,Z$ defines the number of rows 
in the matrix ${\cal G}_\mu$. For hypergraph-product codes in
Eq.\ \eqref{eq:Till} 
the ground state degeneracy  is given by\cite{Kovalev-Pryadko-2012} 
$N_g^X=\tilde{k}_{1}\tilde{k}_{2}$ and $N_g^Z=k_{1} k_{2}$.

\begin{example}
  \label{ex:CSS}
  For a CSS code with the check matrix 
\eqref{eq:CSS}, the partition
  function \eqref{eq:partition0} is a product of those for two
  mutually decoupled spin models defined by matrices $\Theta={\cal
    G}_X$ and $\Theta={\cal G}_Z$, respectively, see 
Eq.\ \eqref{eq:Z0}.  Since ${\cal G}_X
  {\cal G}_Z^T=0$, in the absence of disorder these models are
  mutually dual, modulo logical operators.
\end{example}
\begin{example}
  \label{ex:TZ-vanilla}
  HP codes in Eq.\ \eqref{eq:Till} are CSS codes.  In
  the special case $\mathcal{H}_{1}=\mathcal{H}_{2}^T$, the matrices
  ${\cal G}_X$ and ${\cal G}_Z$ can be mapped
  to each other by permutations of rows and columns; the two spin
  models \eqref{eq:Z0} are identical.  In the absence of disorder both
  models are self-dual, modulo logical operators.
\end{example}
\begin{example}
  \label{ex:extended-cyclic}
  Suppose matrices $\mathcal{H}_{1}$ and $\mathcal{H}_{2}$ in
  Eq.\ \eqref{eq:Till} are square and circulant, corresponding to two
  cyclic codes with generally different check polynomials $h_1(x)$ and
  $h_2(x)$.  Then the matrices ${\cal G}_X$ and ${\cal G}_Z$ can be
  mapped to each other by permutations of rows and columns, and thus
  in the absence of disorder the corresponding spin models
  \eqref{eq:Z0} are self-dual modulo logical operators.  This map is
  generally different from 
  that in the previous example.  This case has a nice layout on square
  lattice with periodic boundary conditions, with the horizontal and
  vertical bonds $R_b$ in Eq.\ \eqref{eq:generalized-wegner} formed
  according to the pattern of coefficients in the polynomials $h_1(x)$
  and $h_2(x)$.  In particular, with $h_1(x)=h_2(x)=1+x$, the
  hypergraph-product code is a toric code, while
  Eq.\ \eqref{eq:Z0} gives two mutually decoupled Ising models.
\end{example}

\begin{example}
  \label{ex:DT}
  Debierre and Turban \cite{Debierre-Turban-1983} suggested a model
  that corresponds to a CSS code in the previous example
  with the  check polynomials $h_{1}(x)=1+x$ and
  $h_{2}(x)=1+x+\ldots+x^{l-1}$ for some positive integer $l$.
  The two binary codes have $k_1=1$ (codewords are all-one or all-zero
  vectors), and, with $n_2$ divisible by $l$, $k_{2}=l-1$ ($2^{l-1}$
  codewords given by the repetitions of all length-$l$ even-weight
  vectors).  With $l=3$, each of the two equivalent spin
  models~\eqref{eq:Z0} have four ground states in a pattern of stripes
  given by the repetitions of the vectors $[1,1,0]$, $[0,1,1]$,
  $[1,0,1]$ or $[0,0,0]$.  A boundary between two distinct ground
  states produce a pattern of ``unhappy'' bonds that corresponds to an
  extended defect $\mathbf{c}$ in Eq.\ \eqref{eq:prob-tot}, see
  Fig.~1, Right.
\end{example}

\begin{figure}[tbp]
\centering \includegraphics[width=1.\columnwidth]{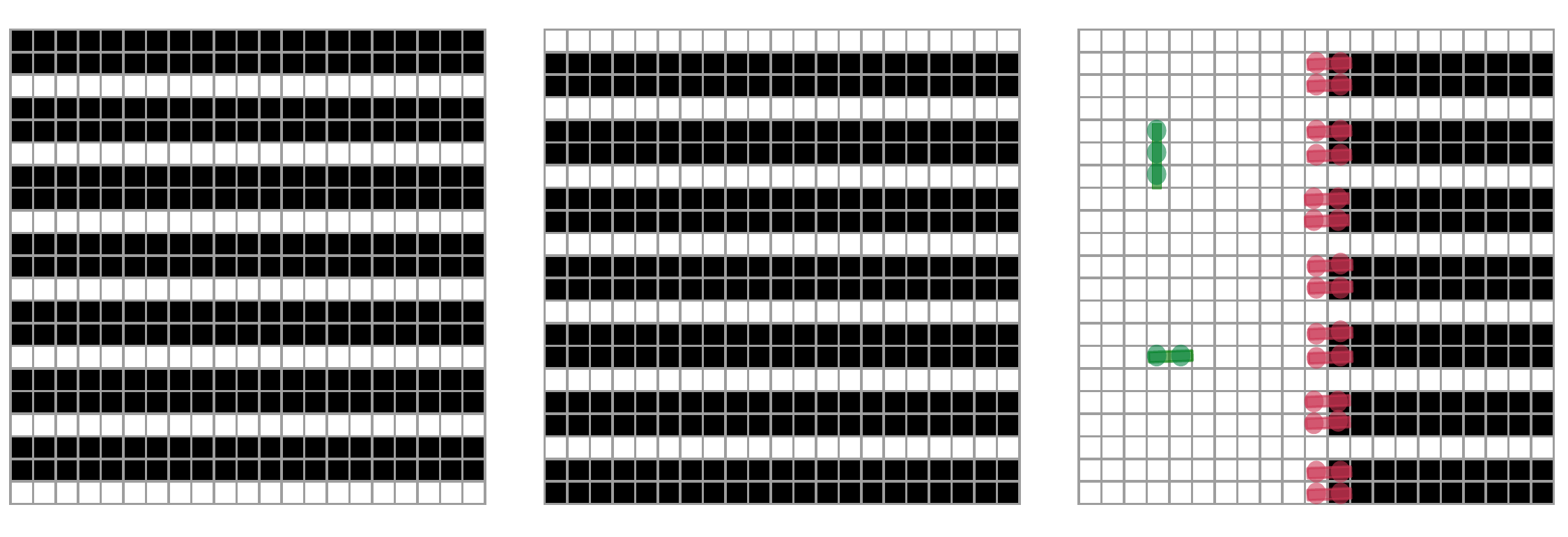} 
\caption{Left and Center: two basis ground states of the spin model in Example
  \protect\ref{ex:DT}, with black squares corresponding to flipped
  spins.  An arbitrary ground state of this spin model is a linear
  combination of these two.   Right: a domain wall between two
  such ground states.  Green squares show the pattern of vertical and
  horizontal bonds involving interactions of two or three spins,
  respectively.  A column of ``unhappy'' bonds forming the domain wall
  is shown with red.}
\label{fig:fig1} 
\end{figure}

\begin{figure}[tbp]
  \centering \includegraphics[width=1.\columnwidth]{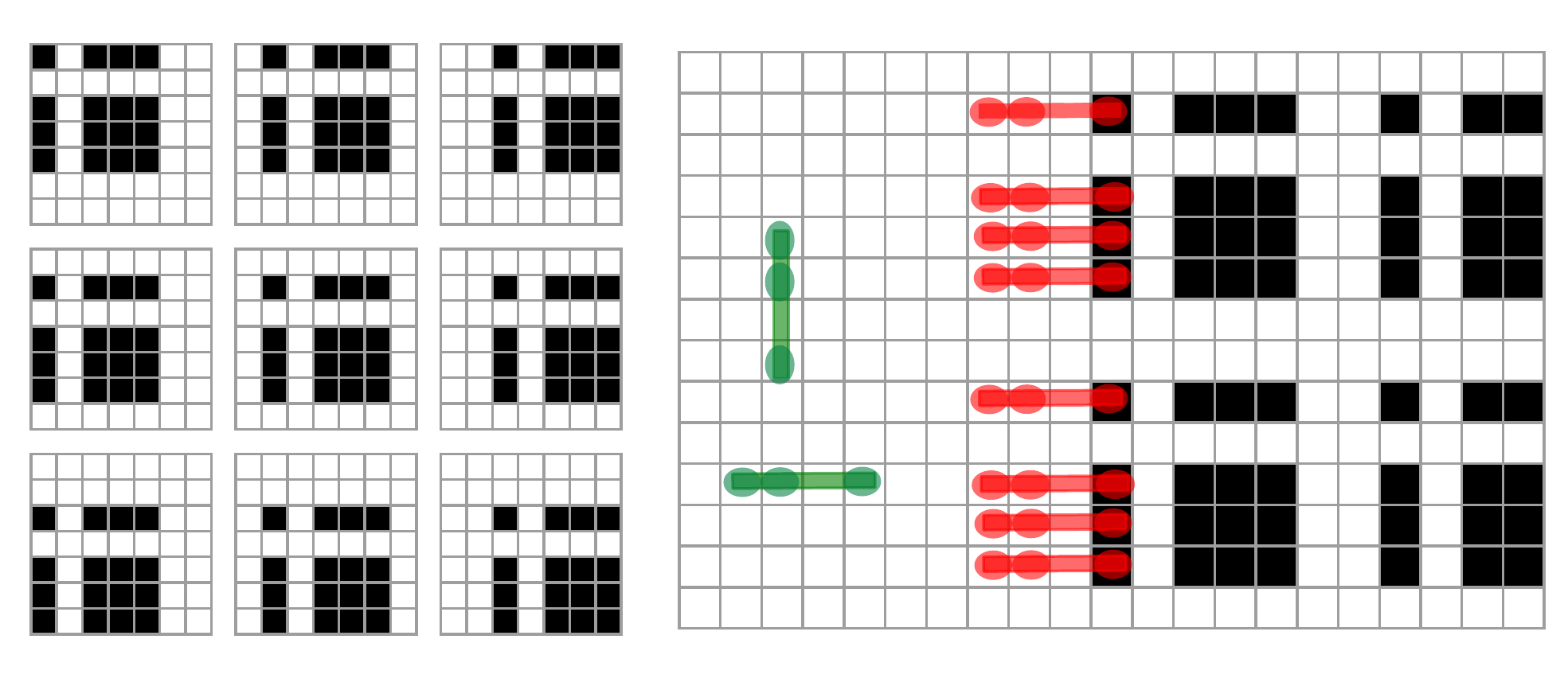} 
  \caption{Left: nine ground states of the spin model
    corresponding to the ${\cal G}_X$ matrix of the HP code~\eqref{eq:Till}
    generated by circulant matrices $\mathcal{H}_{i}$ corresponding to
    $h_{1}(x)=h_2(x)=1+x+x^3$, where $n_1=n_2=21$ (they both must be factors
    of $7$), see Example~\protect\ref{ex:882-18-12}. An arbitrary
    ground state of the spin model is a linear combination of these
    nine states.  Right: a domain wall formed between two such
    ground states.  Green squares on white background show the
    patterns of horizontal and vertical bonds, each involves three
    spins.  A column of ``unhappy'' bonds forming the extended defect
    is shown with red.}
  \label{figfigB}
\end{figure}

\begin{example}
  \label{ex:882-18-12}
  Spin models corresponding to quantum hypergraph-product codes
  $[[98s^2,6s,4s]]$, $s=1,2,\ldots$.  The model is constructed from
  $7s\times 7s$ circulant matrices $\mathcal{H}_{i}$ corresponding to
  $h_{i}(x)=1+x+x^3$, $i=1,2$.  A ground state of such a model is a
  linear combination of the nine basis states with the unit cell in
  Fig.\ 2, Left. 
  Fig.\ 2, Right: a boundary between two ground states.
\end{example}

\subsection{Ordered state}
\ In contrast to spin glass theory of classical binary codes where it is
generally possible to apply a gauge transformation so that perfect
decoding corresponds to a uniform
magnetization \cite{Sourlas-1989,Nishimori-book}, this is not
necessarily possible in the setting corresponding to a quantum code.
For example, the models with the partition function
\eqref{eq:partition0} include those with exact $S\to -S$ symmetry.  In
such a case it appears natural to introduce the average spin as an
order parameter.  On the other hand, such a symmetry is not generic;
the partition function \eqref{eq:generalized-wegner} may not even have
any degeneracy if $\Theta$ is a full-row-rank matrix.  Also, except
for the toric and related
codes local in 2D \cite{Dennis-Kitaev-Landahl-Preskill-2002}, it is not at all
clear what would be the relation of such an order parameter to the
decoding transition in a given code.

Here, we define an \emph{ordered} phase as an analog of the region of
parameters where asymptotically certain decoding is possible.  We
start with two definitions describing different phases: 
\begin{definitionX}
  \label{def:fixed}
  A \emph{fixed-defect phase} of the spin glass model \eqref{eq:partition0}
  corresponding to an infinite family of stabilizer codes has
  \begin{equation}
    [Z_\mathrm{max}(\mathbf{s}_\mathbf{e};\beta)/ 
      Z_\mathrm{tot}(\mathbf{s}_\mathbf{e};\beta)]\to1,\quad
    n\to\infty.
    \label{eq:fixed-defect}  
\end{equation}
\end{definitionX}
\begin{definitionX}
  \label{def:defect-free}
  A \emph{defect-free phase} of the spin glass model \eqref{eq:partition0}
  corresponding to an infinite family of stabilizer codes has
  \begin{equation}
    [Z_0(\mathbf{e};\beta)/ 
      Z_\mathrm{tot}(\mathbf{s}_\mathbf{e};\beta)]\to1,\quad
    n\to\infty.
    \label{eq:defect-free}  
\end{equation}
\end{definitionX}
We note that analogs of Lemma \ref{lemma:upside-down} apply for the
ratios in Eqs.\ \eqref{eq:fixed-defect} and \eqref{eq:defect-free}.
Thus, both in the fixed-defect and the defect-free phases, for any
error $\mathbf{e}$ likely to happen, the partition function
$Z_\mathrm{tot}(\mathbf{s}_\mathbf{e};\beta)$ is going to be dominated by a
single defect configuration, $\mathbf{c}_\mathrm{max}(\mathbf{e})$.
In the defect-free phase,
$\mathbf{c}_\mathrm{max}(\mathbf{e})=\mathbf{0}$, while in a
fixed-defect phase one may have a non-trivial defect
$\mathbf{c}_\mathrm{max}(\mathbf{e})\not\simeq\mathbf{0}$.  


\subsection{No fixed-defect phase on the Nishimori line} \ On the
Nishimori line, the definition of a fixed-defect phase matches that of
a region with asymptotically certain successful decoding, see Eq.\
\eqref{eq:max-phase}.  The latter region terminates at the decoding
transition at the single-bit error probability $p=p_c$.  On the other
hand, the proof of the lower bound on the decoding threshold from
Ref.~\cite{Kovalev-Pryadko-FT-2013} actually establishes the existence
of a zero-defect phase on the Nishimori line, for small enough $p$.
With both phases present, one would expect an additional transition
between these phases at some $p<p_c$.  Theorem
\ref{th:Nishimori-line-defect-free} on p.\
\pageref{th:Nishimori-line-defect-free} shows that this does not
happen because there is no fixed-defect phase along the Nishimori line.

\begin{proof} of {Theorem \ref{th:Nishimori-line-defect-free}.}
  Below the decoding transition, $p<p_c$, according to Lemma
  \ref{lemma:upside-down}, the probability
  $P_\mathrm{tot}(\mathbf{s})$ to obtain each likely syndrome is
  dominated by a single disorder configuration
  $\mathbf{e}_0(\mathbf{s})$.  This is also the configuration most
  likely to happen, as opposed to any other configuration
  corresponding to the same syndrome.  
\end{proof}
In comparison, for $\beta\neq\beta_p$, the disorder probability
distribution $P_0(\mathbf{e})$ is different from the partition
function $Z_0(\mathbf{e};\beta)$.  In general, the dominant
contribution to $Z_\mathrm{tot}(\mathbf{s}_\mathbf{e};\beta)$ may come
from some other defect configuration
$\mathbf{c}_\mathrm{max}(\mathbf{e};\beta)\not\simeq\mathbf{0}$.  

In practical terms, when designing a decoding algorithm, we can
concentrate on the portion of the free energy corresponding to
$Z_\mathbf{0}(\mathbf{e};\beta_p)$ and ignore the possibility of any
non-trivial defects without affecting the decoding probability in the
limit of large $n$.



\subsection{Free  energy of a defect}
\ {\em In a fixed-defect phase:\/} Let us introduce the free energy cost of
flipping the bonds corresponding to an non-zero bits of the codeword
$\mathbf{c}$ on top of the flipped bond pattern in the most likely
configuration $\mathbf{c}_\mathrm{max}(\mathbf{e})$ corresponding to
an error $\mathbf{e}$ with the syndrome $\mathbf{s}=\widetilde G
\mathbf{e}^T$,
\begin{equation}
  \label{eq:defect-free-energy-max}
  \Delta F_\mathbf{c}^{\mathrm{max}}(\mathbf{\mathbf{s}};\beta)\equiv
  \beta^{-1}\log{ Z_\mathrm{max}(\mathbf{s})\over
    Z_{\mathbf{c}_\mathrm{max}(\mathbf{e})+\mathbf{c}}(\mathbf{e}) }.
\end{equation}
\begin{proof}{of Theorem \ref{th:divergent-defect-energy-max}.} In the
  fixed-defect phase each syndrome $\mathbf{s}$ likely to 
  happen must be characterized by a unique configuration of defects,
  with the other configurations strongly suppressed.  Version of Lemma
  \ref{lemma:upside-down} appropriate for this phase (see Def.\
  \ref{def:fixed}) implies that $\Delta
  F_\mathbf{c}^\mathrm{max}(\mathbf{s};\beta)\to \infty$
  asymptotically at large $n$.  The corresponding disorder average must also
  diverge at large $n$.
\end{proof}
If we introduce the minimum weight $d_\mathbf{c}$ of a bit string in
the degeneracy class of $\mathbf{c}$, $d_\mathbf{c}\equiv
\min_{\boldsymbol{\sigma}}\wgt(\mathbf{c}+\boldsymbol{\sigma} G)$, we
can formulate the following bounds
\begin{lemma}
  \label{lemma:Fmax-bounds}
  For any error $\mathbf{e}$ which gives the syndrome
  $\mathbf{s}$, any codeword $\mathbf{c}$, and any temperature
  $\beta^{-1}$, $0\le \Delta
  F_\mathbf{c}^{\mathrm{max}}(\mathbf{s};\beta)\le 2d_\mathbf{c}$.
\end{lemma}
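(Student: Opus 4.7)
The plan is to handle the two inequalities separately. The lower bound is immediate from the definition: since $\mathbf{c}_\mathrm{max}(\mathbf{e})$ is the maximizer of $Z_{\mathbf{c}'}(\mathbf{e};\beta)$ over all codewords $\mathbf{c}'$, in particular $Z_\mathrm{max}(\mathbf{s};\beta) = Z_{\mathbf{c}_\mathrm{max}(\mathbf{e})}(\mathbf{e};\beta) \geq Z_{\mathbf{c}_\mathrm{max}(\mathbf{e}) + \mathbf{c}}(\mathbf{e};\beta)$, so the logarithm in \eqref{eq:defect-free-energy-max} is non-negative.

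For the upper bound, the key is to exploit the gauge (spin-flip) invariance of the Wegner partition function \eqref{eq:generalized-wegner} with $\Theta = G$: flipping the spin $S_r \to -S_r$ changes $R_b$ by $(-1)^{G_{r,b}}$, which is equivalent to shifting $\mathbf{e} \to \mathbf{e} + (\text{row } r \text{ of } G)$. Hence $Z_0(\mathbf{e};\beta)$ is invariant under $\mathbf{e} \to \mathbf{e} + \boldsymbol{\sigma} G$. Applied to $Z_{\mathbf{c}_\mathrm{max}(\mathbf{e})+\mathbf{c}}(\mathbf{e};\beta) = Z_0(\mathbf{e}+\mathbf{c}_\mathrm{max}(\mathbf{e})+\mathbf{c};\beta)$, I can replace $\mathbf{c}$ by any equivalent representative $\mathbf{c}' = \mathbf{c} + \boldsymbol{\sigma} G$. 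Choose $\mathbf{c}'$ achieving the minimum $\wgt(\mathbf{c}') = d_\mathbf{c}$.

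Now write both partition functions as
\begin{equation*}
Z_0(\mathbf{e}';\beta) = \frac{1}{2^{N_g}(2\cosh\beta)^{N_\mathrm{b}}}\sum_{\{S\}} \exp\Bigl(\beta \sum_b (-1)^{e'_b} R_b\Bigr),
\end{equation*}
applied to $\mathbf{e}_1 = \mathbf{e} + \mathbf{c}_\mathrm{max}(\mathbf{e})$ and $\mathbf{e}_2 = \mathbf{e}_1 + \mathbf{c}'$. For each spin configuration, the two exponents differ by $2\beta\sum_{b:\,c'_b=1} (-1)^{(e_1)_b} R_b$, which is bounded in absolute value by $2\beta\, d_\mathbf{c}$ because the sum contains exactly $d_\mathbf{c}$ terms each of magnitude one. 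Thus, termwise,
\begin{equation*}
\exp\Bigl(\beta\sum_b(-1)^{(e_1)_b}R_b\Bigr) \leq e^{2\beta d_\mathbf{c}}\exp\Bigl(\beta\sum_b(-1)^{(e_2)_b}R_b\Bigr),
\end{equation*}
and summing over spins yields $Z_\mathrm{max}(\mathbf{s};\beta) = Z_0(\mathbf{e}_1;\beta) \leq e^{2\beta d_\mathbf{c}} Z_0(\mathbf{e}_2;\beta) = e^{2\beta d_\mathbf{c}} Z_{\mathbf{c}_\mathrm{max}(\mathbf{e})+\mathbf{c}}(\mathbf{e};\beta)$. Taking $\beta^{-1}\log$ gives $\Delta F_\mathbf{c}^{\mathrm{max}}(\mathbf{s};\beta) \leq 2 d_\mathbf{c}$.

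There is no real obstacle here; the only subtle point is recognizing that the gauge symmetry of the Wegner model lets one swap the codeword $\mathbf{c}$ for its minimum-weight representative before bounding the ratio termwise. Everything else is a one-line inequality on sums of signs.
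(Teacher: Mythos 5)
Your proof is correct; the lower bound is handled exactly as in the paper, but your upper bound takes a genuinely different and more elementary route. The paper invokes the Gibbs--Bogoliubov (convexity) inequality to bound $\beta^{-1}\log\bigl(Z_0(\mathbf{e}')/Z_\mathbf{c}(\mathbf{e}')\bigr)$ by the \emph{thermal average} $\sum_{b:\,c_b=1}2\langle(-1)^{e'_b}R_b\rangle$ taken in the ensemble of $Z_0(\mathbf{e}';\beta)$ with $\mathbf{e}'=\mathbf{e}+\mathbf{c}_\mathrm{max}(\mathbf{e})$, then bounds each term by $2$ and finally minimizes over representatives of the degeneracy class of $\mathbf{c}$. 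You instead compare Boltzmann weights configuration by configuration: after using the gauge invariance $Z_0(\mathbf{e})=Z_0(\mathbf{e}+\boldsymbol{\sigma}G)$ up front to replace $\mathbf{c}$ by a minimum-weight representative $\mathbf{c}'$ with $\wgt(\mathbf{c}')=d_\mathbf{c}$, you note that the two exponents differ by $2\beta\sum_{b:\,c'_b=1}(-1)^{(e_1)_b}R_b$, a sum of $d_\mathbf{c}$ terms of unit magnitude, and sum the resulting termwise inequality. Both arguments rest on the same gauge freedom (the paper uses it implicitly when it ``minimizes over the vectors degenerate with $\mathbf{c}$''), and both require $\beta>0$, which is the paper's standing assumption. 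What each buys: your version avoids any convexity argument and is completely termwise; the paper's intermediate Gibbs--Bogoliubov bound is sharper, since $\langle(-1)^{e'_b}R_b\rangle$ can be well below $1$ (and tends to $0$ at high temperature), though that extra strength is not needed for the stated inequality.
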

\begin{proof}
  The lower bound follows trivially from the fact that $Z_{\rm
    max}(\mathbf{s})$ is the largest of $Z_\mathbf{c}(\mathbf{e})$.
  To prove the upper bound, use the Gibbs-Bogoliubov inequality in the
  form:
  \begin{equation}
    \beta^{-1}\log{Z_0(\mathbf{e}')\over
      Z_\mathbf{c}(\mathbf{e}')}
    \le  \langle
    E_{\mathbf{c}+\mathbf{e}'}-E_{(\mathbf{e}')}\rangle=
    \sum_{b:\mathbf{c}_b\not\simeq\mathbf{0}}  
    2\langle (-1)^{\mathbf{e}'_b} R_b\rangle ,\label{eq:upper-bound}  
  \end{equation}
  where $\mathbf{e}' \equiv \mathbf{e}+{\mathbf{c}_{\rm
      max}(\mathbf{e})}$ is the same-syndrome disorder configuration
  such that the maximum is reached at $\mathbf{c}=\mathbf{0}$,
  $E_{\mathbf{e}}\equiv \sum_b (-1)^{e_b}R_b$ is the energy of a spin
  configuration, see Eq.\ \eqref{eq:generalized-wegner}, and the
  averaging is done over all spin configurations contributing to
  $Z_0(\mathbf{e}';\beta)$.  Each term in the r.h.s.\ of 
  Eq.\ \eqref{eq:upper-bound} is uniformly bounded from above, $2
  (-1)^{\mathbf{e}_b} R_b\le 2$; this gives $\Delta
  F_\mathbf{c}^{\mathrm{max}}(\mathbf{e};\beta)\le 2\wgt \mathbf{c}$.
  Minimizing over the vectors degenerate with $\mathbf{c}$
  gives the stated result.
\end{proof}

Note that at zero temperature and in the absence of disorder,
$\mathbf{e}=\mathbf{0}$, the upper bound in
Lemma~\ref{lemma:Fmax-bounds} is saturated.  We conjecture that a
similar asymptotic scaling, with some finite 
\begin{equation}
  \lambda_\mathbf{c}\equiv 
  {[\Delta
      F_\mathbf{c}^\mathrm{max}(\mathbf{s}_\mathbf{e};\beta)]\over
    d_\mathbf{c}},\label{eq:defect-tension}   
\end{equation}
should be valid for the free energy increments averaged over
disorder, with the defect \emph{tension} $\lambda_\mathbf{c}$
analogous to the domain wall tension in the 2D Ising model.  In the
fixed-defect phase, where $\Delta F_\mathbf{c}$ is expected to
diverge, we thus expect the tensions~\eqref{eq:defect-tension} to be
non-zero, $\lambda_\mathbf{c}>0$.

{\em In the defect-free phase:\/} In such a phase, the total partition
function~\eqref{eq:Z-tot} is entirely dominated by that without any
extended defects, see Eq.\ \eqref{eq:partition0}.  Instead of
Eq.\ \eqref{eq:defect-free-energy-max}, it is convenient to consider
the free energy increment for flipping the bonds corresponding to the codeword
$\mathbf{c}$ starting with a given defect configuration $\mathbf{e}$,
\begin{equation}
  \label{eq:defect-free-energy-zero}
  \Delta F_\mathbf{c}^{(0)}(\mathbf{e};\beta)\equiv  
  \beta^{-1} \log {Z_0(\mathbf{e};\beta)\over
    Z_\mathbf{c}(\mathbf{e};\beta)}. 
\end{equation}
Similar to the upper bound in Lemma \ref{lemma:Fmax-bounds}, we can
state 
\begin{equation}
  \label{eq:F0-bound-d}
    \Delta F_\mathbf{c}^{(0)}(\mathbf{e};\beta)\le 2d_\mathbf{c}; 
\end{equation}
however, the corresponding lower bound might be violated for some
disorder configurations $\mathbf{e}$ where
$\mathbf{c}_\mathrm{max}(\mathbf{e})\not\simeq\mathbf{0}$.  In the
defect-free phase, the total probability of such configurations,
$P_\mathrm{defect}$, as well as the configurations where
$F_\mathbf{c}^{(0)}(\mathbf{e};\beta)$ remains bounded,
$P_\mathrm{finite}$, should be vanishingly small at large $n$,
$P_\mathrm{defect}+P_\mathrm{finite}\to0$.  The corresponding bounds
can be readily formulated by analogy with
Lemma~\ref{lemma:upside-down}.  As a result, while in general the
increments in Eqs.\ \eqref{eq:defect-free-energy-max} and
\eqref{eq:defect-free-energy-zero} have both the initial and the final
states different and cannot be easily compared, in the defect-free
phase the corresponding averages should coincide asymptotically at
$n\to\infty$. In particular, this implies $[\Delta
  F_\mathbf{c}^{(0)}(\mathbf{e};\beta)]\to\infty$ at large $n$ in the
defect-free phase.

{\em On the Nishimori line:\/} 
According to Theorem  \ref{th:Nishimori-line-defect-free}, the only
ordered phase at the Nishimori line is the defect-free phase.  This
immediately gives Corollary~\ref{th:corollary}. 

On the Nishimori line, it is convenient to consider the free energy $\Delta
F_\mathbf{c}(\mathbf{s};\beta)$ of a defect $\mathbf{c}$ averaged over the
errors $\mathbf{e}$ with the same syndrome, $\mathbf{s}=\widetilde
G\mathbf{e}^T$,
\begin{equation}
  \label{eq:defect-free-energy-ave-s}
  \Delta F_\mathbf{c}(\mathbf{s};\beta)\equiv  \left[
  \Delta F_\mathbf{c}^{(0)}(\mathbf{e};\beta) \right]_\mathbf{s}, 
\end{equation}
where the average is extended over all non-equivalent codewords $\mathbf{c}$, 
\begin{equation}
  \label{eq:syndrome-averaging}
  \left[f(\mathbf{e})\right]_s\equiv \sum_\mathbf{c}
  {P_0(\mathbf{e}+\mathbf{c})\over 
    P_\mathrm{tot}(\mathbf{s})} f(\mathbf{e}+\mathbf{c}). 
\end{equation}
 For the average \eqref{eq:defect-free-energy-ave-s}, we prove the
 following version of  
Lemma~\ref{lemma:Fmax-bounds}:
\begin{lemma}
  \label{lemma:F-syndrome-averaged-Nishimori}
  At the Nishimori line, for every allowed syndrome $\mathbf{s}$ and
  every codeword $\mathbf{c}$, the free
  energy averaged over the errors with the same syndrome satisfies
  $0\le \Delta 
  F_\mathbf{c}(\mathbf{s};\beta_p)
  \le 2d_\mathbf{c} $. 
\end{lemma}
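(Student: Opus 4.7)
My approach is to exploit the Nishimori identity $Z_0(\mathbf{e};\beta_p)=P_0(\mathbf{e})$ from Eq.\ \eqref{eq:nishimori-line} to rewrite the syndrome-averaged free-energy increment as a Kullback--Leibler divergence, which immediately yields the lower bound, and to use an elementary Hamiltonian-comparison argument for the upper bound.

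For the lower bound, on the Nishimori line $Z_{\mathbf{c'}}(\mathbf{e};\beta_p)=Z_0(\mathbf{e}+\mathbf{c'};\beta_p)=P_0(\mathbf{e}+\mathbf{c'})$, so the weights $q_{\mathbf{c'}}\equiv P_0(\mathbf{e}+\mathbf{c'})/P_\mathrm{tot}(\mathbf{s})$ appearing in Eq.\ \eqref{eq:syndrome-averaging} form a probability distribution on the (finite) set of inequivalent codeword classes $\mathbf{c'}$, by Eq.\ \eqref{eq:prob-tot}. Substituting the definition \eqref{eq:defect-free-energy-zero} of $\Delta F_\mathbf{c}^{(0)}$ into Eq.\ \eqref{eq:defect-free-energy-ave-s}, the $P_\mathrm{tot}(\mathbf{s})$ factors cancel inside the logarithm and I obtain $\Delta F_\mathbf{c}(\mathbf{s};\beta_p)=\beta_p^{-1}\sum_{\mathbf{c'}} q_{\mathbf{c'}}\log(q_{\mathbf{c'}}/q_{\mathbf{c'}+\mathbf{c}})$. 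Since $\mathbf{c'}\mapsto\mathbf{c'}+\mathbf{c}$ is a bijection on codeword classes, $\{q_{\mathbf{c'}+\mathbf{c}}\}$ is again a probability distribution; the right-hand side is therefore $\beta_p^{-1}D_\mathrm{KL}(q\|q^{(\mathbf{c})})\ge0$ by Gibbs' inequality.

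For the upper bound I would bound $\Delta F_\mathbf{c}^{(0)}(\mathbf{e}';\beta)$ pointwise in $\mathbf{e}'$. The partition functions $Z_0(\mathbf{e}';\beta)$ and $Z_\mathbf{c}(\mathbf{e}';\beta)$ correspond to Hamiltonians $-\sum_b(-1)^{e'_b}R_b$ and $-\sum_b(-1)^{e'_b+c_b}R_b$ differing only on the support of $\mathbf{c}$, so for every spin configuration the energy gap is at most $2\wgt(\mathbf{c})$. Monotonicity of the free energy in the Hamiltonian then gives $|\beta^{-1}\log(Z_0/Z_\mathbf{c})|\le 2\wgt(\mathbf{c})$, and since $Z_\mathbf{c}$ depends on $\mathbf{c}$ only through its equivalence class modulo the rows of $G$ (as used already in Lemma \ref{lemma:Fmax-bounds}), I may replace $\mathbf{c}$ by a minimum-weight representative to obtain the sharper pointwise bound $\Delta F_\mathbf{c}^{(0)}(\mathbf{e}';\beta)\le 2d_\mathbf{c}$---this is precisely Eq.\ \eqref{eq:F0-bound-d}. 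Averaging over $\mathbf{c'}$ with the non-negative weights $q_{\mathbf{c'}}$ preserves this upper bound.

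The only non-routine step is recognizing the KL-divergence structure. Pointwise, $\Delta F_\mathbf{c}^{(0)}(\mathbf{e};\beta)$ can very well be negative---it is precisely when $\mathbf{c}_\mathrm{max}(\mathbf{e})\not\simeq\mathbf{0}$ that this happens, as the text explicitly notes---so the bound $\ge0$ is a property of the syndrome average and not of the integrand. The Nishimori condition makes the weights coincide with the quantity being log-compared, and that coincidence is exactly what converts a sign-indefinite average into a manifestly non-negative relative entropy. Everything else reduces to an elementary Hamiltonian-perturbation estimate already available in the paper.
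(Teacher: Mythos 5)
Your proposal is correct and follows essentially the same route as the paper: the lower bound is exactly the paper's Gibbs-inequality argument, with your weights $q_{\mathbf{c}'}$ and $q_{\mathbf{c}'+\mathbf{c}}$ playing the roles of the paper's normalized distributions $f_\mathbf{b}$ and $g_\mathbf{b}$ (the relative-entropy reading is the same inequality $\log x\ge 1-1/x$), and the upper bound is the pointwise bound of Eq.\ \eqref{eq:F0-bound-d} applied term by term, which you merely re-derive via the Gibbs--Bogoliubov estimate already used for Lemma \ref{lemma:Fmax-bounds} rather than citing it.
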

\begin{proof}%
  The upper bound is trivial since it applies for every term in the
  average, see Eq.\ \eqref{eq:F0-bound-d}.  The lower bound follows
  from the Gibbs inequality.  Explicitly, introduce two normalized
  distribution functions of codewords $\mathbf{b}$:
  $f_{\mathbf{b}}\equiv
  P_\mathbf{0}(\mathbf{e}')/P_\mathrm{tot}(\mathbf{s})$, $
  g_{\mathbf{b}}\equiv
  P_\mathbf{c}(\mathbf{e}')/P_\mathrm{tot}(\mathbf{s})$, where
  $\mathbf{e}'\equiv \mathbf{e}+\mathbf{b}$; then, using the
  map~\eqref{eq:nishimori-line} on the Nishimori line,
  $$
  \beta\Delta F_\mathbf{c}(\mathbf{s};\beta_p)=\sum_{\mathbf{b}}
  f_\mathbf{b}\log {f_\mathbf{b}\over g_\mathbf{b}}\ge
  \sum_\mathbf{b}f_\mathbf{b}\left(1-{g_\mathbf{b}\over f_\mathbf{b}}\right)=0,
  $$
  where the summation is done over all non-equivalent codewords
  $\mathbf{b}$ and we used $\log (x)\ge 1-1/x$.
\end{proof}
Note that this Lemma gives an alternative proof of Theorem
\ref{th:Nishimori-line-defect-free}.
\subsection{Self-averaging} 
\ Conditions of Theorem \ref{th:divergent-defect-energy-max} guarantee
that the disordered system is not in a spin glass phase.  A
self-averaging for the partition functions
$Z_\mathbf{c}(\mathbf{e};\beta)$ would immediately imply the statement
of the theorem.  Note however, that (\textbf{i}) in the presence of
disorder self-averaging is not expected for the partition function
even in the case of the toric codes as fluctuations could be
exponentially large, and (\textbf{ii}) spin models corresponding to
general families of quantum codes, whether LDPC or not, are expected
to involve highly non-local interactions. Thus, without additional
conditions, one cannot guarantee self-averaging even for the free
energy.  

However, we did not rely on self-averaging in any of the proofs.  In
particular, results in this section apply to spin models corresponding to
finite-rate quantum hypergraph-product and related
codes\cite{Tillich-Zemor-2009,Kovalev-Pryadko-2012} that can be
obtained from random binary LDPC codes:

\begin{example}  
  \label{ex:finite-R}
  This is a special case of the model in Example \ref{ex:TZ-vanilla}.
  Consider a \emph{random} binary matrix ${\cal H}$ with $h$ non-zero
  entries per row and $v$ per column, with $h<v$, e.g., see
  Ref.~\cite{Gallager-1962}.  The rate of the corresponding binary
  code ${\cal C}_{\cal H}^\perp$ with parameters $[n_c,k_c,d_c]$ is
  limited, $R_\mathrm{c}\equiv k_\mathrm{c}/n_\mathrm{c}\ge 1-h/v$.
  With high probability at large $n_\mathrm{c}$, the classical code
  will have the relative distance in excess of
  $\delta_\mathrm{c}\equiv\delta_c(h,v)$ given in
  Ref.~\cite{Gallager-1962}.  Such an $[n_{c},k_{c},d_{c}]$ code
  produces a quantum HP code \eqref{eq:Till} with ${\cal H}_1={\cal
    H}_2^T={\cal H}$, which is a quantum LDPC code with the asymptotic
  rate $k/n\ge (v-h)^{2}/(h^{2}+v^{2})$ and the distance scaling as
  $d/\sqrt{n}=\delta_{c}v/\sqrt{h^{2}+v^{2}}$.  Such a code has a
  decoding transition at a finite $p$, see
  Ref.~\cite{Kovalev-Pryadko-FT-2013}.  Our present results indicate
  that each of the corresponding spin models \eqref{eq:Z0} has
  non-local bonds involving up to $v$ spins, exponentially large
  number of mutually inequivalent extended defects, and an ordered
  state where such defects do not appear.  In addition, as already
  stated in Example \ref{ex:TZ-vanilla}, the two models
  are self-dual modulo logical operators.
\end{example}

\section{Phase transitions}
\subsection{Transition to a disordered phase}
\ {\em Transition mechanism:\/} 
An ordered phase (whether fixed-defect or defect-free) of the
model~\eqref{eq:Z-tot} is characterized by a unique defect pattern
$\mathbf{c}_\mathrm{max}(\mathbf{e})$ for every likely configuration
of flipped bonds $\mathbf{e}$.  In the case of a code family where $k$
remains fixed, for the stability of such a phase it is sufficient that
non-trivial defects $\mathbf{c}\not\simeq\mathbf{0}$ have divergent
free energies, as in Theorem \ref{th:divergent-defect-energy-max}.  On
the other hand, defects can proliferate if at least one of the free
energies $\Delta F_\mathbf{c}^{\rm max}$ remains bounded in the
asymptotic $n\to\infty$ limit.

The situation is different in the case of a code family with divergent
$k$, e.g., with fixed rate $R\equiv k/n$, as in
Example~\ref{ex:finite-R}.  Here, the number of %
different defects, $2^{2k}-1$, diverges exponentially at large $n$; in
an ordered phase the free energies of individual defects must be large enough
to suppress this divergence.  This implies, in particular, that for a
typical defect the
tension~\eqref{eq:defect-tension} must exceed certain limit.  The
statement of Theorem \ref{th:tension-average} concerns the
corresponding average tension,
\begin{equation}
  {\overline\lambda}\equiv(2^{2k}-1)^{-1}
  \sum_{\mathbf{c}\not\simeq\mathbf{0}}\lambda_\mathbf{c}. 
  \label{eq:average-tension}
\end{equation}

\begin{proof}{of Theorem \ref{th:tension-average}.}
Let us start with a version of Lemma~\ref{lemma:upside-down} for the
fixed-defect phase (Def.\ \ref{def:fixed}): for
  any likely disorder configuration $\mathbf{e}$,
  \begin{equation}
    \sum_{\mathbf{c}\not\simeq\mathbf{0}}{Z_{\mathbf{c}+\mathbf{c}_\mathrm{max}(\mathbf{e})}
      (\mathbf{e};\beta) \over 
      Z_\mathrm{max}(\mathbf{s_\mathbf{e}};\beta)}\to0,
    \label{eq:large-sum}
  \end{equation}
  asymptotically at $n\to\infty$.  Note that we cannot just average
  this expression term-by-term, since unlikely errors
  could potentially dominate the sum which involves an exponentially
  large number of terms.  Instead, we fix some $\epsilon>0$ and first
  consider the average of Eq.\ \eqref{eq:large-sum} only over the
  ``good'' errors where the sum does not exceed $\epsilon$.  Using the
  standard inequality $\exp\langle f\rangle\le \langle \exp f\rangle
  $, we obtain the following expression involving the averages of the
  free energies \eqref{eq:defect-free-energy-max} over ``good'' errors
  only:
  \begin{equation}
    \label{eq:large-sum-upper-bound}
    \sum_{\mathbf{c}\not\simeq\mathbf{0}}\exp\left({-\beta [\Delta F_\mathbf{c}^{\rm
            max}(\mathbf{s}_\mathbf{e};\beta)]_\mathrm{good}}\right)\le \epsilon.
  \end{equation}
  Rewriting this sum in terms
  of an average over non-trivial defects which we denote as $\left\langle\,
  \cdot\,\right\rangle_{\mathbf{c}\not\simeq\mathbf{0}}$, and using the same 
  inequality, we get 
  \begin{equation}
    \label{eq:large-sum-upper-bound}
    (2^{2k}-1)\exp\left(-\beta \left\langle [\Delta
            F_\mathbf{c}^\mathrm{max}(\mathbf{s}_\mathbf{e};\beta)]_\mathrm{good}
        \right\rangle_{\mathbf{c}\not\simeq\mathbf{0}}\right)\le 
    \epsilon. 
  \end{equation}
  It is convenient to introduce an analog of the 
  tension~\eqref{eq:defect-tension} for finite $\epsilon$,
  \begin{equation}
    \label{eq:tension-epsilon}
    \lambda_\mathbf{c}^{(\epsilon)}\equiv {[\Delta F_\mathbf{c}^{\rm
          max}]_\mathrm{good}\over d_\mathbf{c}}, 
  \end{equation}
  along with the corresponding average
  ${\overline\lambda}_{(\epsilon)}$ over non-trivial defects
  $\mathbf{c}\not\simeq\mathbf{0}$, defined as in Eq.\
  \eqref{eq:average-tension}.  According to
  Lemma~\ref{lemma:Fmax-bounds}, each of the tensions satisfy $0\le
  \lambda_\mathbf{c}^{(\epsilon)}\le 2$, which means the same bounds
  for the defects-average, $0\le {\overline\lambda}_{(\epsilon)}\le 2$.  With
  the help of the trivial upper bound $d_\mathbf{c}\le
  N_\mathrm{b}=2n$, Eq.\ \eqref{eq:large-sum-upper-bound} gives
  \begin{equation}
    \label{eq:large-sum-bound}
    (2^{2k}-1)\exp({-2n\beta{\overline\lambda}^{(\epsilon)}})\le\epsilon, 
  \end{equation}
  which implies for large $n$, $k$
  \begin{equation}
    \label{eq:tension-bound-epsilon}
    \beta{\overline\lambda}^{(\epsilon)}\ge {k\over n}\log 2=R\log2.
  \end{equation}
  We can now introduce the full average tension ${\overline\lambda}$
  which involves both ``good'' and ``bad'' errors by writing a
  Bayesian expansion similar to Eq.\ \eqref{eq:Bayesian-expansion}.
  The key observation leading to the statement of the Theorem is that
  the contribution of ``bad'' errors disappears in the large-$n$ limit
  since for each error configuration the tension is limited, while the
  total probability of ``bad'' errors $P_\mathrm{bad}\to0$.
\end{proof}
As a consequence, for any code family with a finite rate $R$, we
expect one of the two possibilities at the transition to a disordered
phase: (\textbf{i}) Transition driven by proliferation of some (e.g.,
finite) subset of the defects whose tensions $\lambda_\mathbf{c}$
vanish at the transition, with the average in Theorem
\ref{th:tension-average} still finite; and (\textbf{ii}) Transition
driven by the entropy of some macroscopic number of the defects, in
which case tensions of all defects remain bounded at the
transition, $\lambda_\mathbf{c}\ge\lambda_0>0$.  In the case
(\textbf{i}), one gets to a phase with ``limited disorder'' where only
some of all possible defects $\mathbf{c}$ may happen with non-zero
probability at large $n$.

{\em Continuity of the transition:\/} At the Nishimori line, the
average energy is known
exactly\cite{Nishimori-1981,Nishimori-1980,Nishimori-book}, it is a
continuous function of parameters.  This guarantees the continuity of
the decoding transition.  The same conclusion can be drawn from the
bound \eqref{eq:specific-heat-bound} on the specific heat along the
Nishimori line---the derivation is identical to the standard
case\cite{Nishimori-1980,Morita-Horiguchi-1980,Nishimori-1981,Nishimori-book}.

On the other hand, away from the Nishimori line, the transition from
an ordered to a disordered phase can be (and often is) discontinuous.
In particular, mean field analysis using the TAP equations (named for
Thouless, Anderson, and Palmer, see
Ref.~\cite{Thouless-Anderson-Palmer-1977}) generically gives a
discontinuous transition for local magnetization whenever the bonds
$R_b$ couple more than two spins.

{\em Self-duality in the absence of disorder:\/}
In the absence of errors, we can use Wegner's
duality~\eqref{eq:duality} to relate the partition
functions of the models with the generator matrices $G$ and $G^*$,
that is, Eqs.\ \eqref{eq:partition0} and \eqref{eq:Z-tot}, since the
matrices $G^*$ and $\widetilde G^*$ differ by an inessential
permutation of columns (bonds).  Assuming the transition is unique,
whether continuous or not, it must happen at the
self-dual point, $\sinh
(2\beta_{\rm s.d.})=1$.  Here Eq.\ \eqref{eq:duality} gives $Z_{\rm
  tot}(\mathbf{0};\beta_{\rm s.d.})= 2^k Z_{0}(\mathbf{0};\beta_{\rm
  s.d.})$, or, equivalently,
\begin{equation}
  \label{eq:clean-self-dual-point}
  \sum_{\mathbf{c}\not\simeq\mathbf{0}}e^{-\beta_{\rm s.d.}\Delta
    F^{(0)}_\mathbf{c}(\mathbf{0};\beta_{\rm s.d.})}=2^k-1.
\end{equation}
This equation is exact since no disorder is involved.  The summation
over ${\bf c}$ here includes $2^{2k}-1$ terms, and the result is
independent of the distance of the code.  For a finite-$R$ code
family, arguments similar to those in the proof of Theorem
\ref{th:tension-average} give a lower bound
${\overline\lambda}_\mathrm{s.d.}\ge (R/2)\ln 2$, which is smaller by
half of the corresponding bound deep inside an ordered phase.

{\em Location of the multicritical point:\/} In many types of local spin
glasses on self-dual lattices the transition from the ordered phase on
the Nishimori line happens at a multicritical point whose location to
a very good accuracy has been predicted by the strong-disorder
self-duality
conjecture\cite{Nishimori-1979,Nishimori-Nemoto-2003,Nishimori-2007,%
  Nishimori-Ohzeki-2006,Ohzeki-Nishimori-Berker-2008,%
  Ohzeki-2009,Bombin-PRX-2012,Ohzeki-Fujii-2012}.  In case of the
Ising spin glasses, the corresponding critical probability $p_c\approx
0.110$ satisfies Eq.\ \ref{eq:self-duality-disordered}.  The
derivation of this expression\cite{Nishimori-1979} uses explicitly
only the probability distribution of allowed energy values for a
single bond.  Our limited simulations indicate that for several
quasi-local models (see Example~\eqref{ex:extended-cyclic}) with finite
$k$ the multicritical point is indeed located at $p_c\approx 0.11$,
also very close to the Gilbert-Varshamov existence bound for zero-rate
codes.  However, for code families with finite rates $k/n$, see
Example \ref{ex:finite-R}, the threshold probability must be below the
Shannon limit \eqref{eq:shannon-threshold-CSS}, which means the
self-duality conjecture must be strongly violated for $R>1/2$.  

\subsection{Transition between defect-free and fixed-defect phases}
\ Theorem~\ref{th:Nishimori-line-defect-free} states that on the
Nishimori line below the decoding transition the spin
model \eqref{eq:partition0} is in the defect-free phase.  If a
distinct fixed-defect phase exists somewhere on
the phase diagram, there is 
a possibility for a transition between these phases. 

More generally, defect-free phase is a special case of an ordered
fixed-defect phase.  One can imagine a transitions between two such
phases.  However, at least in the case of a temperature-driven
transition, the spin model \eqref{eq:partition0} must become
disordered at the transition point.  Indeed, for a transition to
happen at $T=T_0(p)$, at least for some of the likely disorder
configurations, for $T<T_0(p)$, $Z_{\mathbf{c}_1}(\mathbf{e};\beta)$
must dominate, while for $T>T_0(p)$, some of errors $\mathbf{e}$ will
be dominated by $Z_{\mathbf{c}_2}(\mathbf{e};\beta)$ with
$\mathbf{c}_2\not\simeq\mathbf{c}_1$.  This implies that at the actual
transition point some codewords must become degenerate with non-zero
probability, which would violate the condition in Def.\
\ref{def:fixed}.  Once the system becomes disordered at some $p$, one
would generically expect it to remain disordered at larger $p$.  By
this reason, we expect that non-trivial fixed-defect phases are not
common.


\subsection{Absence of a local order parameter}
\ In Examples \ref{ex:CSS} to \ref{ex:finite-R} we considered some
spin models which do not have any gauge-like symmetries.  However, the
same approach can be also used to construct non-local spin models
which have ``local'' gauge symmetries and at the same time highly 
non-trivial phase diagrams. 

The following example is a generalization of the mutually dual
three-dimensional Ising model and a random plaquette  $\Z2$ gauge.
\begin{example}
\label{ex:gauge}
Consider a CSS
code \eqref{eq:CSS} with the generators: 
\begin{eqnarray}
  {\cal G}_{X}&=& \left( E_{1}\otimes G,\;\;\; R\otimes E_{2}\right),
  \label{eq:3D1}\\
  {\cal G}_{Z}&=& \left(\begin{array}{cc}
      R\otimes\widetilde{E}_{2},&\widetilde{E}_{1}\otimes G \\
      E_{1}\otimes \widetilde{G},&0
    \end{array}\right),
  \label{eq:3D2}
\end{eqnarray}
where $R$ is a square circulant matrix corresponding to the polynomial
$h(x)=1+x$ and $G\equiv (G_X,G_Z)$ is the generator matrix
\eqref{eq:generator-matrix} of an arbitrary quantum code. This
construction follows the hypergraph-product code construction
\eqref{eq:Till}, and the unit matrices $E_{1}$, $\widetilde{E}_{1}$,
$E_{2}$, $\widetilde{E}_{2}$ are chosen accordingly. The additional
block involving the conjugate matrix $\widetilde{G}=(G_{Z},G_{X})$
differentiates this construction from the hypergraph-product code
construction.  This code defines two non-interacting, mutually dual
spin models \eqref{eq:Z0}.  In particular, when $G$ corresponds to a
toric code, we recover a three dimensional Ising model for $\mu=X$,
and a three dimensional random plaquette $\Z2$ gauge model for
$\mu=Z$.
\end{example}
A spin model with a local gauge symmetry cannot have a local order
parameter\cite{Wegner71}.  Thus, one cannot hope to construct a local
order parameter that would describe the transition from a defect-free
phase and be applicable to all of the models \eqref{eq:partition0}.  

The same result can be obtained by noticing that the transition from
the defect-free phase can be driven by delocalization of any of
$2^{2k}-1$ non-trivial defects.  For a finite-$R$ code family this
number scales exponentially with $n$; we find it not likely that an
order parameter defined locally can distinguish this many
possibilities.  

\subsection{Spin correlation functions} 
\ The average of any product of spin variables which cannot be
expressed as a product of the bond variables in the Hamiltonian is
zero \cite{Wegner71}.  Thus, we consider two most general non-trivial
spin correlation functions:
\begin{eqnarray}
  \label{eq:correlation-function-tot}
  Q^\mathbf{m}_\mathrm{tot}(\mathbf{e};\beta)&\equiv&
  {\mathscr{Z}_{\mathbf{e},\mathbf{m}}(\widetilde
    G^*;\{K_b=\beta\})\over \mathscr{Z}_{\mathbf{e},\mathbf{0}}(\widetilde
    G^*;\{K_b=\beta\})} ,\\
  Q^\mathbf{m}_\mathbf{c}(\mathbf{e};\beta)&\equiv&
  {\mathscr{Z}_{\mathbf{e}+\mathbf{c},\mathbf{m}}(
    G;\{K_b=\beta\})\over
    \mathscr{Z}_{\mathbf{e}+\mathbf{c},\mathbf{0}}(G;\{K_b=\beta\})};  
  \label{eq:correlation-function-c}
\end{eqnarray}
both correlation functions satisfy $-1\le Q^\mathbf{m}(\mathbf{e};\beta)\le 1$. 
The thermal average in Eq.\ \eqref{eq:correlation-function-c}
corresponds to summation over spin configurations in
$Z_\mathbf{c}(\mathbf{e};\beta)$, while that in Eq.\
\eqref{eq:correlation-function-tot} corresponds to the same defect and
spin configurations that enter $Z_\mathrm{tot}(\mathbf{s};\beta)$,
cf.\ Eq.\ \eqref{eq:Z-tot}.  Using the explicit form
\eqref{eq:generalized-wegner}, definitions of $Z_\mathrm{tot}$ and
$Z_\mathbf{c}$, and the fact that additional linearly-independent rows
in $\widetilde G^*$ form a basis of non-equivalent codewords
$\mathbf{c}$, we can write the following expansion
\begin{equation}
  \label{eq:correlation-function-expansion}
  Q_\mathrm{tot}^\mathbf{m}(\mathbf{e};\beta)
  =\sum_\mathbf{c}(-1)^{\mathbf{c}\cdot \mathbf{m}}
  {Z_\mathbf{c}(\mathbf{e};\beta)\,
    Q_\mathbf{c}^\mathbf{m}(\mathbf{e};\beta)\over
      Z_\mathrm{tot}(\mathbf{s}_\mathbf{e};\beta)}.
\end{equation}
The correlation functions contain the products of $\prod_b R_b
^{m_b}=\prod_r (S_r )^{G_{rb}m_b}$, or the product of spin variables
in the support of the syndrome vector $\mathbf{s}_{\widetilde{\mathbf{m}}}\equiv G
\mathbf{m}^T=\widetilde G\widetilde{\mathbf{m}}^T$ corresponding to
$\mathbf{m}$.  Thus, the defined 
correlation functions are trivially symmetric with respect to any
gauge symmetries, $S_r\to S_r (-1)^{\alpha_r}$, $\boldsymbol\alpha
G=0$ (present whenever there are $N_g>0$ linearly dependent rows of
$G$), as well as the transformations of $\mathbf{m}$ leaving the
syndrome invariant, $\mathbf{m}\to \mathbf{m}+{\boldsymbol\gamma}
\widetilde G$.  

\emph{Wilson loop:} In lattice gauge theory, in the absence of a local
order parameter, the deconfining transition can be characterized by
the average of the Wilson loop operator\cite{Wilson-loop-1974}, with
the thermal and disorder average scaling down as an exponent of the
area in the high-temperature phase, and an exponent of the perimeter
in the low-temperature phase.  In the case of the three-dimensional
${\mathbb Z}_2$ gauge model\cite{Wegner71,Kogut79}, see Example
\ref{ex:gauge}, the corresponding correlator is a product of plaquette
operators covering certain surface.  The correlation function
\eqref{eq:correlation-function-c} is a natural generalization to
non-local Ising models, with the minimum weight
$\mathbf{d}_\mathbf{m}\equiv
\min_{\boldsymbol\gamma}\wgt(\mathbf{m}+{\boldsymbol\gamma}\tilde G)$
of $\mathbf{m}$ corresponding to the area, and the binary weight of
the syndrome $\mathbf{s}_{\widetilde{\mathbf{m}}}$ corresponding to
the perimeter.  Indeed, taking $\mathbf{e}=\mathbf{c}=\mathbf{0}$, at
high temperatures, independent bond variables $R_b$ fluctuate
independently, and one can write
$Q_\mathbf{0}^\mathbf{m}(\mathbf{0};\beta)=\langle \prod
R_b^{m_b}\rangle\propto \beta^{d_\mathbf{m}}$, which corresponds to
the area law.  The same quantity at low temperatures can be evaluated
in leading order by substituting average spin $S_b\to \langle
S_b\rangle\sim M$, with the result
$Q_\mathbf{0}^\mathbf{m}(\mathbf{0};\beta)\propto M^{\wgt
  \mathbf{s}_{\widetilde{\mathbf{m}}}}$, the perimeter law.  We expect
  such a behavior to persist in a finite range of temperatures below
  the transition from the ordered phase, at least in the case of LDPC
  codes.

However, in general there is no guarantee that the spin model
\eqref{eq:partition0} has a unique transition, and the functional
form of the spin correlation function
\eqref{eq:correlation-function-c} with generic $\mathbf{m}$ cannot
be easily found at intermediate temperatures.  By this reason, it
remains an open question whether the scaling of the analog of the
Wilson loop can be used to distinguish between specific disordered
phases.

\emph{Indicator correlation functions.}  Consider the correlation
function \eqref{eq:correlation-function-expansion} for $\mathbf{m}$
such that the corresponding syndrome is zero
$\mathbf{s}_{\widetilde{\mathbf{m}}}=\mathbf{0}$.  Then the spin
products in each term of the expansion disappear, and
$Q^\mathbf{m}_\mathbf{c}(\mathbf{e};\beta)=1$ for any $\mathbf{c}$.
The corresponding $\mathbf{m}$ are just the dual codewords
$\widetilde{\mathbf{b}}$.  In general, for a pair of codewords
$\mathbf{b}$, $\mathbf{c}$, the scalar product $\mathbf{c}\cdot
\widetilde{\mathbf{b}}=0$ iff the corresponding logical operators
commute, see the \textbf{Background} section.  For each codeword
$\mathbf{c}\not\simeq\mathbf{0}$ there is at least one codeword
$\mathbf{c}'$ such that $\mathbf{c}\cdot{\widetilde{\mathbf{c}}}'=1$,
and the $2k$ scalar products $\mathbf{c}\cdot
{\widetilde{\mathbf{b}}}$ with the basis codewords $\mathbf{b}$ are
sufficient to recover the equivalence class of $\mathbf{c}$.

We further note that in the defect-free phase, for any likely disorder
$\mathbf{e}$, $Z_\mathrm{tot}(\mathbf{s}_\mathbf{e};\beta)$ is
dominated by the term with $\mathbf{c}=0$, thus at large $n$ the
average
$[Q_\mathrm{tot}^{\widetilde{\mathbf{b}}}(\mathbf{s}_\mathbf{e};\beta)]=1$
for any codeword $\mathbf{b}$.  Similarly, in a fixed-defect phase,
there is only one dominant term $Z_\mathbf{c}(\mathbf{e};\beta)$, and
$[Q_\mathrm{tot}^{\widetilde{\mathbf{b}}}(\mathbf{s}_\mathbf{e};\beta)]=\pm1$;
the patterns of signs for different $\mathbf{b}$ can be used to find
out which of the codewords $\mathbf{c}$ dominates the partition
function.

\subsection{Bound on the location of the defect-free phase}
In order to prove the Theorem \ref{th:boundary}, we first need to
extend identities of Nishimori's gauge theory of spin
glasses\cite{Nishimori-1981,Horiguchi-Morita-1981,Nishimori-book} to
the averages of the 
spin correlation functions \eqref{eq:correlation-function-tot}.  We
prove the following
\begin{lemma}
  \label{lemma:Nishimori-identities}
  The disorder average of the spin correlation function
  \eqref{eq:correlation-function-tot} for any $\mathbf{m}$ satisfies 
  $[Q_\mathrm{tot}^\mathbf{m}(\mathbf{e};\beta)]=[Q_\mathrm{tot}^\mathbf{m}(\mathbf{e};\beta)\,Q_\mathrm{tot}^\mathbf{m}(\mathbf{e};\beta_p)]$.
\end{lemma}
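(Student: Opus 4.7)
The plan is to adapt Nishimori's gauge-theory identity\cite{Nishimori-1981,Nishimori-book} to the random-bond Wegner model with generator matrix $\widetilde G^*$. First I would set up the underlying gauge transformation. For each $\boldsymbol\tau\in\{\pm1\}^{N_s^*}$, writing $\tau_r=(-1)^{\eta_r}$, introduce the bond-flip pattern $\mathbf{x}_{\boldsymbol\tau}\equiv\boldsymbol\eta\widetilde G^*$ and the spin observable $M(\boldsymbol\tau;\mathbf{m})\equiv\prod_b R_b(\boldsymbol\tau)^{m_b}$ with $R_b(\boldsymbol\tau)\equiv\prod_r\tau_r^{(\widetilde G^*)_{rb}}$. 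The relabeling $S_r\to S_r\tau_r$ in Eq.\ \eqref{eq:generalized-wegner} immediately yields $\mathscr{Z}_{\mathbf{e}+\mathbf{x}_{\boldsymbol\tau},\mathbf{m}}(\widetilde G^*;\beta)=M(\boldsymbol\tau;\mathbf{m})\,\mathscr{Z}_{\mathbf{e},\mathbf{m}}(\widetilde G^*;\beta)$. Setting $\mathbf{m}=\mathbf{0}$ shows that $Z_\mathrm{tot}(\mathbf{s}_\mathbf{e};\beta)$ is gauge invariant, while the general case gives $Q_\mathrm{tot}^\mathbf{m}(\mathbf{e}+\mathbf{x}_{\boldsymbol\tau};\beta)=M(\boldsymbol\tau;\mathbf{m})\,Q_\mathrm{tot}^\mathbf{m}(\mathbf{e};\beta)$. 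Crucially, the \emph{product} $Q_\mathrm{tot}^\mathbf{m}(\mathbf{e};\beta)\,Q_\mathrm{tot}^\mathbf{m}(\mathbf{e};\beta_p)$ is then manifestly gauge invariant and depends on $\mathbf{e}$ only through the syndrome $\mathbf{s}_\mathbf{e}$.

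Next I would encode the disorder probability in the same language. Let $W(\mathbf{e};\boldsymbol\tau;\beta_p)\equiv\prod_b\exp\bigl(\beta_p(-1)^{e_b}R_b(\boldsymbol\tau)\bigr)/(2\cosh\beta_p)$; the Nishimori condition \eqref{eq:nishimori-temperature} converts Eq.\ \eqref{eq:error-probability} into $P(\mathbf{e})=W(\mathbf{e};\mathbf{1};\beta_p)$, and the gauge identity above reads $W(\mathbf{e}+\mathbf{x}_{\boldsymbol\tau};\mathbf{1};\beta_p)=W(\mathbf{e};\boldsymbol\tau;\beta_p)$. Comparing with Eq.\ \eqref{eq:Z-tot} at $\beta=\beta_p$ yields the sum rule
\[
\sum_{\boldsymbol\tau}M(\boldsymbol\tau;\mathbf{m})\,W(\mathbf{e};\boldsymbol\tau;\beta_p)=2^{N_g^*}\,P_\mathrm{tot}(\mathbf{s}_\mathbf{e})\,Q_\mathrm{tot}^\mathbf{m}(\mathbf{e};\beta_p),
\]
with the case $\mathbf{m}=\mathbf{0}$ reproducing $\sum_{\boldsymbol\tau}W(\mathbf{e};\boldsymbol\tau;\beta_p)=2^{N_g^*}P_\mathrm{tot}(\mathbf{s}_\mathbf{e})$.

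The main computation is then a single rearrangement. Starting from $[Q_\mathrm{tot}^\mathbf{m}(\mathbf{e};\beta)]=\sum_\mathbf{e}P(\mathbf{e})\,Q_\mathrm{tot}^\mathbf{m}(\mathbf{e};\beta)$, for any $\boldsymbol\tau$ I would shift $\mathbf{e}\to\mathbf{e}+\mathbf{x}_{\boldsymbol\tau}$ (a bijection) and then average over all $\boldsymbol\tau\in\{\pm1\}^{N_s^*}$; the two gauge laws turn this into $2^{-N_s^*}\sum_\mathbf{e}Q_\mathrm{tot}^\mathbf{m}(\mathbf{e};\beta)\sum_{\boldsymbol\tau}M(\boldsymbol\tau;\mathbf{m})W(\mathbf{e};\boldsymbol\tau;\beta_p)$. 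Applying the sum rule collapses this to
\[
[Q_\mathrm{tot}^\mathbf{m}(\mathbf{e};\beta)]=\frac{2^{N_g^*}}{2^{N_s^*}}\sum_\mathbf{e}Q_\mathrm{tot}^\mathbf{m}(\mathbf{e};\beta)\,Q_\mathrm{tot}^\mathbf{m}(\mathbf{e};\beta_p)\,P_\mathrm{tot}(\mathbf{s}_\mathbf{e}).
\]
Since the summand depends on $\mathbf{e}$ only through $\mathbf{s}_\mathbf{e}$, and since \eqref{eq:dual-code} implies there are $2^{N_\mathrm{b}-\rank\widetilde G}=2^{\rank\widetilde G^*}=2^{N_s^*-N_g^*}$ errors per syndrome, the prefactor exactly cancels this multiplicity and the right-hand side equals $\sum_\mathbf{s}P_\mathrm{tot}(\mathbf{s})\,Q_\mathrm{tot}^\mathbf{m}(\mathbf{e}_\mathbf{s};\beta)\,Q_\mathrm{tot}^\mathbf{m}(\mathbf{e}_\mathbf{s};\beta_p)=[Q_\mathrm{tot}^\mathbf{m}(\mathbf{e};\beta)\,Q_\mathrm{tot}^\mathbf{m}(\mathbf{e};\beta_p)]$, which is the claim.

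The main obstacle is pure bookkeeping: keeping the normalizations consistent in the presence of $N_g^*>0$ linearly dependent rows of $\widetilde G^*$, which simultaneously inflates each $\boldsymbol\tau$ sum by $2^{N_g^*}$ and divides the Wegner sum by the same factor, and ensuring this matches the $2^{N_s^*-N_g^*}$ errors-per-syndrome multiplicity dictated by \eqref{eq:dual-code}. Once this counting is tracked consistently, the argument reduces to the classical Nishimori gauge manipulation.
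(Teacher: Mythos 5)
Your proposal is correct and follows essentially the same route as the paper: the paper's proof simply invokes the standard Nishimori gauge argument, citing as the only new ingredient the gauge-sum identity $\sum_{\boldsymbol\alpha}P_0(\mathbf{e}+\boldsymbol\alpha\widetilde G^*)\propto Z_\mathrm{tot}(\mathbf{s}_\mathbf{e};\beta_p)$, which is precisely your sum rule $\sum_{\boldsymbol\tau}M(\boldsymbol\tau;\mathbf{m})W(\mathbf{e};\boldsymbol\tau;\beta_p)=2^{N_g^*}P_\mathrm{tot}(\mathbf{s}_\mathbf{e})Q_\mathrm{tot}^\mathbf{m}(\mathbf{e};\beta_p)$ at $\mathbf{m}=\mathbf{0}$. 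You have merely written out in full the gauge transformation generated by the rows of $\widetilde G^*$ and the normalization bookkeeping that the paper leaves implicit; the counting of $2^{N_s^*-N_g^*}$ errors per syndrome checks out.
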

\begin{proof}
    Follows exactly the proof in the usual case, if we observe 
    $$
    \sum_\mathbf{{\boldsymbol\alpha}}
    P_0(\mathbf{e}+{\boldsymbol\alpha}\widetilde G^*)
    =2^{N_r-N_g+N_g^*}Z_\mathrm{tot}(\mathbf{s}_\mathbf{e};\beta_p),
    $$
    where $N_r$ is the number of rows of the matrix $G$. 
\end{proof}
\begin{proof}{of Theorem \ref{th:boundary}.}
  To shorten the notations, denote the correlation function in Lemma
  \ref{lemma:Nishimori-identities} as $A\equiv
  Q_\mathrm{tot}^\mathbf{m}(\mathbf{e};\beta)$ and $B$ the same
  correlation function at the Nishimori temperature, $\beta=\beta_p$.
  Lemma \ref{lemma:Nishimori-identities} gives
\begin{equation}
  \label{eq:sg-identities}
  [A]=[AB],\quad [B]=[B^2].
\end{equation}
Now, for any real-valued $t$, the inequality
\begin{equation}
  0\le [(A-t B)^2]=[A^2]+t^2[B^2]-2t [A B]
  \label{eq:uncertainty}
\end{equation}
must be valid.  This is equivalent to $[AB]^2\le [A^2][B^2]$.  Using
the identities~\eqref{eq:sg-identities}, we obtain $[A]^2\le
[A^2][B]\le [B]=[B^2]$, which is equivalent to
\begin{equation}
  \label{eq:sg-inequality}
  [Q^{\mathbf{m}}_\mathrm{tot}(\mathbf{e};\beta)]^2\le [Q^{\mathbf{m}}_\mathrm{tot}(\mathbf{e};\beta_p)].
\end{equation}
A different derivation of this inequality can be found in
Ref.~\cite{Nishimori-2002}.  
If sum both sides of Eq.\ \eqref{eq:sg-inequality} over all dual codewords
$\mathbf{m}=\widetilde{\mathbf{c}}$, using the expansion
\eqref{eq:correlation-function-expansion}, we obtain
\begin{equation}
  \label{eq:sg-inequality-two}
 \sum_{\mathbf{c}}
 [Q^{\mathbf{m}=\widetilde{\mathbf{c}}}_\mathrm{tot}(\mathbf{e};\beta)]^2\le
 2^{2k}\left[{Z_0(\mathbf{e};\beta_p)\over
       Z_\mathrm{tot}(\mathbf{s}_\mathbf{e};\beta_p)}\right]. 
\end{equation}
The r.h.s.\ equals exactly the average probability of successful
decoding times $2^{2k}$; for large $n$ it equals $2^{2k}$ below the
decoding transition, $p<p_c$, and it is smaller than $2^{2k}$ above
the decoding transition.  On the other hand, we saw that in in the
defect-free phase, at large $n$, all correlation functions
$[Q^{\widetilde{\mathbf{m}}}_\mathrm{tot}(\mathbf{e};\beta)]=1$.
According to Eq.\ \eqref{eq:sg-inequality-two}, this is only possible
for $p<p_c$.  
\end{proof}
This implies that the phase boundary below the Nishimori line is
either vertical or reentrant as a function of temperature.  Recent
numerical studies suggest that the second option is true for the
random bond Ising model\cite{Thomas-Katzgraber-2011}.

\section{Concluding Remarks}

In this work we considered spin glass models related to the decoding
transition in stabilizer error correcting codes.  Generally, these are
non-local models with multi-spin couplings, with exact Wegner-type
self-duality at zero disorder, but no $S\to-S$ symmetry or other
sources of ground state degeneracy.  Nevertheless, we show that for
models corresponding to code families with maximum-likelihood decoding
(ML) transition at a finite bit error probability $p_c$, there is a
region of an ordered phase which must be limited to $p\le p_c$, and a
line of non-trivial phase transitions.

The models support generally non-topological extended defects which
generalize the notion of domain walls in local spin models.  For a
quantum code that encodes $k$ qubits, there are $2^{2k}-1$ different
types of extended defects.  A disordered phase is associated with
proliferation of at least one of such defects.  In an ordered phase,
the free energy of each defect must diverge at large $n$.  Moreover,
for a code family with finite rate $k/n$, the average defect tension,
an analog of domain wall line tension, must exceed some finite
threshold (Theorem \ref{th:tension-average}).

The original decoding problem corresponds to the Nishimori line at the
phase diagram of the disordered spin model, with the
maximum-likelihood (ML) decoding transition located exactly at the
multicritical point of the spin model.  The ML decoding threshold is
the maximum possible threshold for any decoder.  Thus, exploring this
connection with statistical mechanics of spin glasses, one can compare
codes irrespectively of the decoder efficiency, and get an absolute
measure of performance for any given, presumably suboptimal, decoder.


There are a number of open question in relation to the models we
studied.  In particular, is there some sort of universality for
transitions with nonlocal spin couplings?  If yes, what determines the
universality class, and is there an analog of the hyperscaling
relation?


\section*{Acknowledgments}
This work was supported in part by the U.S. Army Research Office under
Grant W911NF-11-1-0027, by the NSF under Grant 1018935,
and Central Facilities of the Nebraska Center for Materials and
Nanoscience supported by the Nebraska Research Initiative.
We acknowledge hospitality of the Institute for Quantum Information
and Matter, an NSF Physics Frontiers Center with support of the Gordon
and Betty Moore Foundation.  We also thank Professor Hidetoshi
Nishimori for useful comments on early version of the manuscript.

\bibliographystyle{pnas}
\bibliography{qc_all,sg,more_qc,corr,lpp,MyBIB}




\end{document}